\else \usepackage[caption=false,font=footnotesize]{subfig} 
\newtheorem{theorem}{Theorem}
\newtheorem{corollary}{Corollary}
\newtheorem{proposition}{Proposition}
\newtheorem{definition}{Definition}
\newtheorem{proof}{Proof}
\newtheorem{example}{Example}
\begin{document}
\title{Lattice Functions for the Analysis of Analog-to-Digital Conversion}
\author{Pablo~Mart\'inez-Nuevo,~\IEEEmembership{Member,~IEEE,}
        and~Alan~V.~Oppenheim,~\IEEEmembership{Life Fellow,~IEEE}
\thanks{\copyright~2019 IEEE. Personal use of this material is permitted. Permission from IEEE must be obtained for all other uses, in any current or future media, including reprinting/republishing this material for advertising or promotional purposes, creating new collective works, for resale or redistribution to servers or lists, or reuse of any copyrighted component of this work in other works.

This work was supported in part by the Texas Instruments Leadership University Program and in part by support from Analog Devices, Inc.. The work of P. Mart\'inez-Nuevo was also supported in part by Fundaci\'on Rafael del Pino, Madrid, Spain.

P. Mart\'inez-Nuevo was with the Department of Electrical Engineering and Computer Science, Massachusetts Institute of Technology. He is now with the research department at Bang \& Olufsen, 7600 Struer, Denmark (e-mail: pmnuevo@alum.mit.edu).

A. V. Oppenheim is with the Department of Electrical Engineering and Computer Science, Massachusetts Institute of Technology, Cambridge MA 02139 USA (e-mail: avo@mit.edu).}
\thanks{Digital Object Identifier 10.1109/TIT.2019.2907996}}

\markboth{IEEE Transactions on Information Theory}%
{Shell \MakeLowercase{\textit{et al.}}: Bare Demo of IEEEtran.cls for IEEE Journals}

\maketitle

\begin{abstract}
Analog-to-digital (A/D) converters are the common interface between analog signals and the domain of digital discrete-time signal processing. In essence, this domain simultaneously incorporates quantization both in amplitude and time, i.e. amplitude quantization and uniform time sampling. Thus, we view A/D conversion as a sampling process in both the time and amplitude domains based on the observation that the underlying continuous-time signals representing digital sequences can be sampled in a lattice---i.e. at points restricted to lie on a uniform grid both in time and amplitude. We refer to them as lattice functions. This is in contrast with the traditional approach based on the classical sampling theorem and quantization error analysis. The latter has been mainly addressed with the help of probabilistic models, or deterministic ones either confined to very particular scenarios or considering worst-case assumptions. In this paper, we provide a deterministic theoretical analysis and framework for the functions involved in digital discrete-time processing. We show that lattice functions possess a rich analytic structure in the context of integral-valued entire functions of exponential type. We derive set and spectral properties of this class of functions. This allows us to prove in a deterministic way and for general bandlimited functions a fundamental lower bound on the maximum frequency component introduced by quantization that is independent of the resolution of the quantizer.
\end{abstract}

\begin{IEEEkeywords}
Analog-to-Digital Conversion, Sampling Theory, Bandlimited Signals, Discrete-Time Signal Processing, Quantization Error.
\end{IEEEkeywords}

%
\IEEEpeerreviewmaketitle

\section{Introduction}
\IEEEPARstart{T}{he} acquisition theory of common analog-to-digital (A/D) converters is based on discrete time and analog amplitude. The classical sampling theorem \cite{Whittaker:1915aa,Kotelnikov:1933aa,Shannon:1949aa} provides a theoretical foundation by representing a bandlimited signal by its analog amplitude values at a uniform time grid. Alternatively, it is also possible to consider an acquisition procedure where signal representation is based on quantized amplitude and analog time, i.e. amplitude sampling \cite{Martinez-Nuevo:2016ab}. The latter represents the input signal, for analysis purposes, by the time instants at which a reversible transformation of the input crosses equally-spaced amplitude values. 

In practice, however, the sequences generated by an A/D converter implicitly represent signals that lie between the two sampling paradigms described above, i.e. quantization is present both in the amplitude and time domains. This manifests itself in uniform time sampling, which results in discrete-time processing, and amplitude quantization, i.e. digital representation. Note that it is also possible to perform digital signal processing in continuous time when quantization is introduced \cite{Tsividis:2003aa}.

In this paper, we view A/D conversion from the perspective of sampling theory. We consider the quantized discrete-time sequences generated by an A/D converter as coming from samples of a function that takes amplitude values on a uniform grid at uniform instants of time. We refer to these signals as lattice functions and its acquisition as lattice sampling. In particular, we show that they can be interpreted as integral-valued entire functions of exponential type. 

Based on the analysis of the spectral properties of these functions, we derive a deterministic result about the influence of the quantization error in the spectrum of discrete-time digital sequences. In particular, irrespective of the resolution of the quantizer and the bandwidth of the input signal to an A/D converter, the discrete-time Fourier transform (DTFT) of the digital sequence contains frequency components $\geq0.8$ rad/s. It is common, for theoretical purposes, to consider the set of bandlimited functions in the analysis of A/D conversion. However, in practice, an A/D converter implicitly considers lattice functions. Thus, we derive results about the cardinality of bandlimited lattice functions and examples of functions within this set in order to provide a deeper understanding of the subset of bandlimited signals that an actual A/D converter is implicitly considering.

The approach taken here is in contrast with the traditional analysis of A/D converters based on amplitude quantization as a source of error. The analysis in this context has been mainly addressed from two different perspectives. On the one hand, the probabilistic interpretation considers discrete-time sequences where the quantizer---a nonlinear system---is modeled as a linear system which adds uniformly distributed white noise, in general with a range related to the quantization step \cite{Bennett:1948aa}\cite{Sripad:1977aa}. This noise model is a useful analytic tool used in the analysis of oversampled A/D conversion, and oversampled A/D conversion with noise shaping. 

The deterministic approach has been considered in several scenarios. In the specific case of limit-cycle oscillations in digital filters, it was used to determine bounds on the limit-cycle amplitude for fixed-point implementations of recursive digital filters \cite{Parker:1971aa}. In \cite{Bennett:1948aa}, the spectrum of a quantized sinusoid was also studied in a deterministic way where the well-known result about the output consisting of the odd harmonics of the input frequency was derived. Within the context of digital control systems incorporating quantization, a worst-case upper bound on the dynamic quantization error was derived in \cite{Johnson:1965aa,Lack:1966aa}. In general, the derived bounds are too loose compared with the stochastic approach and experimental data \cite{Parker:1971aa}.

In Section \ref{section:LatticeFunctions}, we formally introduce lattice functions and its relationship to the discrete-time sequences that are the output of an A/D converter. We introduce the latter within the framework of consistent measurements, i.e. consistent resampling and requantization. We show that A/D converters can be viewed as implicitly transforming any input signal to a lattice function when considering an appropriate interpolation of the digital sequence. This suggests that input signals to an A/D converter may be preconditioned to be lattice functions in analogy with antialiasing filtering. 
Section \ref{section:IVBandlimited} studies the properties of the set of functions that a common discrete-time signal processing system deals with due to the transformation performed by an A/D converter. In particular, it shows that lattice functions with a given bandwidth form a countable set within the set of signals with the same bandwidth. Additionally, it gives some examples of a family of functions belonging to this set with the same cardinality. Section \ref{section:IVEntireFunctions} provides results about the connection between integral-valued bandlimited functions and a lower bound on their maximum frequency component. Finally, Section \ref{section:LFSpectralProps} generalizes the previous results to arbitrary lattice functions and provides several implications of this result regarding the Discrete-Time Fourier Transform and interpolation of quantized sequences---with the corresponding connection to the quantization error---, and in connection to Fourier series with quantized coefficients.

Throughout the paper, we consider for a function $f:\mathbb{R}\to\mathbb{C}$ the following definitions of the Fourier transform and its inverse
\begin{equation}
\label{eq:FTx_pairs}
\begin{split}
\hat{f}(\xi)&=\int_{\mathbb{R}}f(t)e^{-i2\pi \xi t}\mathrm{d}t,\ \xi\in\mathbb{R}\\
f(t)&=\int_{\mathbb{R}}\hat{f}(\xi)e^{+i2\pi \xi t}\mathrm{d}\xi,\ t\in\mathbb{R}.
\end{split}
\end{equation}
The time units of the variable $t$ are considered to be seconds. Thus, we denote the set of functions bandlimited to $\sigma$ rad/s by 
\begin{equation}
\mathcal{B}_\sigma=\{f\in L^2(\mathbb{R}):\hat{f}(\xi)= 0\textrm{ for }|\xi|\geq\sigma/2\pi,\ \sigma>0\}.
\end{equation}

The main results regarding the spectrum of integral-valued bandlimited functions are based upon the interpretation of bandlimited signals as entire functions of exponential type. Let us introduce a few concepts that will be useful later. A function complex differentiable at every point in $\mathbb{C}$ is referred to as entire. An entire function $f$ is of exponential type if there exists constants $M,\tau>0$ so that $|f(z)|\leq Me^{\tau|z|}$ for all $z\in\mathbb{C}$. If $\sigma=\inf \tau$ taken over all $\tau$ satisfying the latter inequality, it is said to be of exponential type $\sigma$. A function with appropriate decay conditions $f(t)$ is bandlimited to $\sigma$ rad/s if and only if it can be analytically continued to the whole complex plane in such a way that $|f(z)|\leq e^{\sigma|z|}$ for all $z\in\mathbb{C}$ \cite[Chapter 4, Theorem 3.3]{Stein:2003aa}\cite[Theorem X]{Paley:1934aa}.

\section{Lattice Functions}
\label{section:LatticeFunctions}
A useful approach to studying the class of signals represented by an A/D converter is to first consider a sampling process in which quantization is both present in time and amplitude, we refer to it as \textit{lattice sampling}. This consists of taking samples whenever the source signal crosses exactly the points defined by a two-dimensional grid, i.e. a lattice. In particular, consider the procedure illustrated in Fig.~\ref{fig:LatticeSampling_concept} where the input signals are sampled at the points $\{(nT+\tau,m\Delta+\gamma)\}_{n,m\in\mathbb{Z}}$ where $T$ is the sampling period, $\Delta$ can be interpreted as the quantization step, and $\tau,\gamma\in\mathbb{R}$ represent the offsets. Samples are taken only when the input signal passes through one of these points in a lattice. This approach lies between the classical sampling theorem and amplitude sampling when considered from the point of view of which domains are quantized or assumed to be analog. 

\begin{figure}[!ht]
\centering
\includegraphics[scale=1.2]{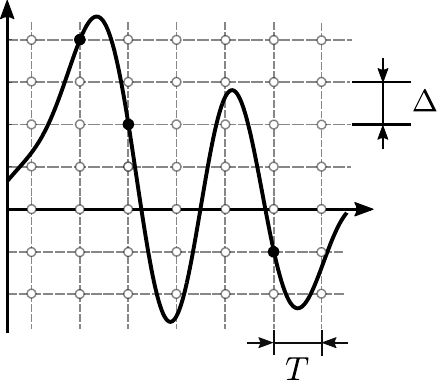}
\caption{Example of sampling in a two-dimensional grid.}
\label{fig:LatticeSampling_concept}
\end{figure}

Although the lattice points are uniformly separated in time and uniformly separated in amplitude, lattice sampling can be nonuniform in time and also in amplitude as illustrated in Fig.~\ref{fig:LatticeSampling_concept}. Our approach throughout this paper is to restrict ourselves to those functions that cross a lattice point at every multiple of $T$, i.e. they take values that are integer multiples of $\Delta$ at every integer multiple of $T$ accounting for the offsets accordingly. We refer to a function satisfying the latter as a \textit{lattice function}.

\begin{definition}
A continuous signal $f:\mathbb{R}\to\mathbb{C}$ is a lattice function if $f(nT+\tau)=m_n\Delta+\gamma$ for all $n\in\mathbb{Z}$ where $m_n\in\mathbb{Z}$, $\Delta,T>0$, and $\gamma,\tau\in\mathbb{R}$.
\end{definition}

\subsection{Consistent Resampling and Requantization}
The connection of lattice functions to A/D converters can be first introduced by considering the block diagram representation in Fig.~\ref{fig:ADC_intro_lattice}. Without loss of generality, consider that $\tau=\gamma=0$. We represent an A/D converter by a continuous-to-discrete (C/D) block followed by a quantizer. The C/D block outputs an analog value $a_n=f(nT)$ every $T>0$ seconds for an input signal $f(t)$ and $n\in\mathbb{Z}$. We consider a uniform quantizer, denoted by $Q_\Delta$ for some $\Delta>0$, with quantization levels $\{m\Delta\}_{m\in\mathbb{Z}}$. Thus, both systems represented in Fig.~\ref{fig:ADC_intro_lattice} are input-output equivalent. 

\begin{figure}[!ht]
\centering
\includegraphics[scale=0.5]{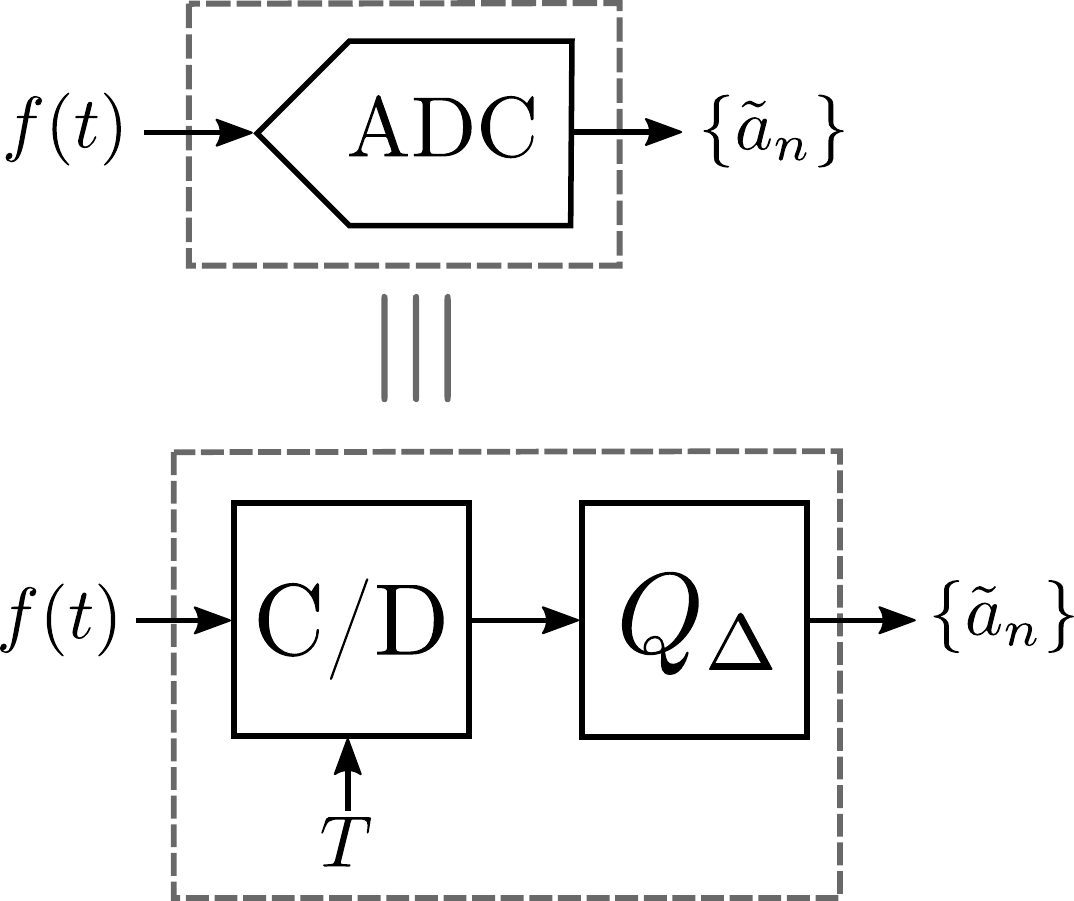}
\caption{Equivalent representation of an ADC consisting of a C/D block that performs uniform time sampling with period $T>0$ followed by a uniform quantizer where $\Delta>0$ is the quantization step.}
\label{fig:ADC_intro_lattice}
\end{figure}

The discrete-to-continuous (D/C) block in Fig.~\ref{fig:ADC_consistent}(a) is an interpolator based on the quantized samples $\{\tilde{a}_n\}$ from a uniform time sampling process at intervals $T$. If the interpolator takes the form 
\begin{equation}
\label{eq:interpolator}
\begin{split}
&\tilde{g}(t)=\sum_{n\in\mathbb{Z}}\tilde{a}_n\psi(t-nT)\\
&\textrm{where }\psi(nT)=0\textrm{ for }n\neq0\textrm{, and }\psi(0)=1,
\end{split}
\end{equation} 
assuming $\psi$ satisfies appropriate decay and regularity conditions, then the output of the D/C block will satisfy $\tilde{g}(nT)=\tilde{a}_n=m_n\Delta$ for $m_n\in\mathbb{Z}$ and all $n\in\mathbb{Z}$. Note that the properties above are also used in digital communications to reduce or eliminate intersymbol interference \cite{Oppenheim:2015aa}. The latter is sufficient for $\tilde{g}$ to be a lattice function. Thus, we can see the A/D converter as implicitly mapping any input signal to a lattice function.

The reconstruction according to (\ref{eq:interpolator}) leads to consistent measurements \cite{Unser:1994aa,Unser:1997aa}. On the one hand, Fig.~\ref{fig:ADC_consistent}(b) shows \textit{consistent resampling}, i.e. $f(nT)=g(nT)=a_n$ where it is not necessary that $f$ and $g$ are identical. On the other hand, the sequence $\{\tilde{a}_n\}$ satisfies \textit{consistent requantization} in the sense that it is not altered when passed through a number of quantizers similar to $Q_\Delta$. Thus, if the input $f$ to the system in Fig.~\ref{fig:ADC_consistent}(a) is a lattice function, then we have that $f(nT)=\tilde{g}(nT)=\tilde{a}_n$ for all $n\in\mathbb{Z}$, i.e. lattice functions lead to \textit{consistent resampling} and \textit{requantization}.

\begin{figure}[!ht]
\centering
\subfloat[]{\includegraphics[scale=0.49]{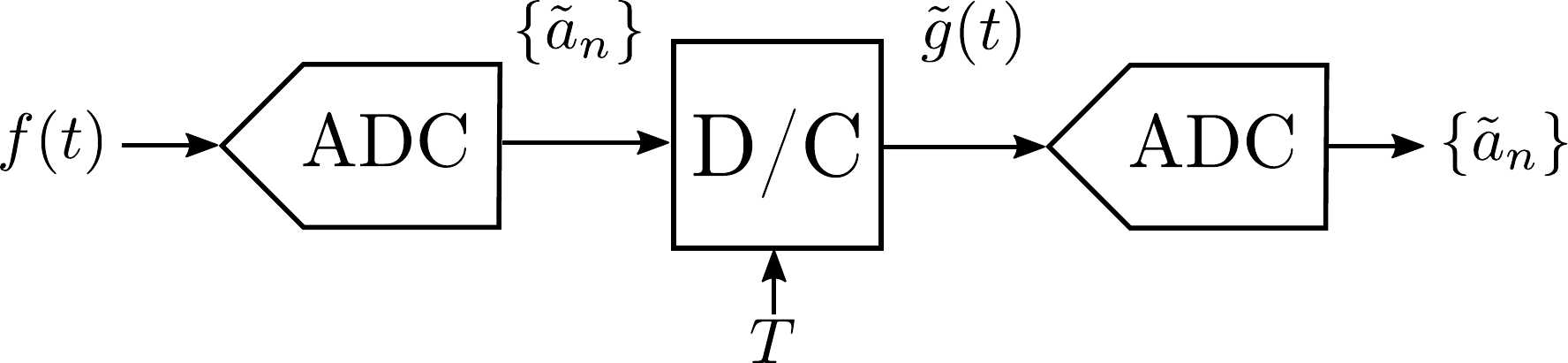}} 
\hfil 
\subfloat[]{\includegraphics[scale=0.5]{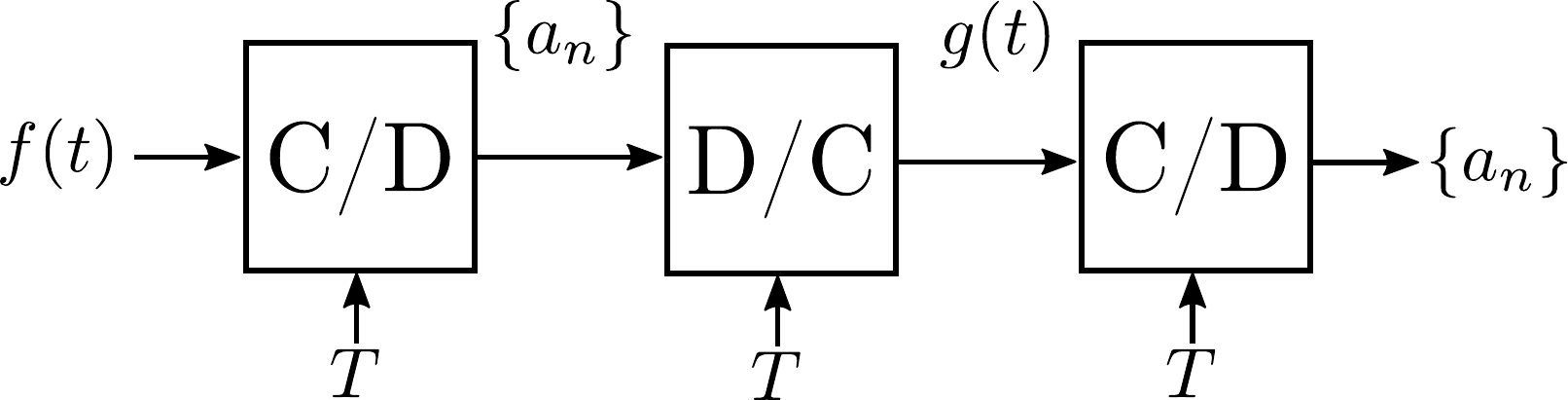}} 
\caption{Consistent measurements assuming the discrete-to-continuous (D/C) block satisfies (\ref{eq:interpolator}). (a) A/D conversion followed by interpolation and A/D conversion showing consistent resampling and requantization, i.e. assuming $f$ is a lattice function, then $g$ is also a lattice function satisfying $f(nT)=\tilde{g}(nT)=\tilde{a}_n$. (b) Consistent resampling, i.e. $f(nT)=g(nT)=a_n$.}
\label{fig:ADC_consistent} 
\end{figure}

\subsection{Integral-valued Lattice Functions}
We have seen that a continuous-time signal to an A/D converter is implicitly mapped to a continuous-time signal that can be sampled at the lattice $\{(mT+\tau,n\Delta+\gamma)\}_{m,n\in\mathbb{Z}}$---i.e. a lattice function---assuming consistent resampling and requantization. For ease of illustration and without loss of generality, we will consider throughout most of the paper the lattice $\{(n,m)\}_{n,m\in\mathbb{Z}}$, i.e. the amplitude levels and the sampling instants correspond, in both cases, to the integers. The generalization to an arbitrary lattice will be carried out in the last section. However, first considering this choice of parameters makes the development and notation less cumbersome. Thus, we will focus on lattice functions that take integer values at the integers, i.e. \textit{integral-valued functions}.

\begin{definition}
A function $f:\mathbb{R}\to\mathbb{C}$ is called integral valued if $f(n)\in\mathbb{Z}$ for all $n\in\mathbb{Z}$.
\end{definition}

The set of lattice functions contains integral-valued functions as a proper subset. It is common to consider in (\ref{eq:interpolator}) an interpolating function of the form $\psi(t)=\mathrm{sinc}((t-nT)/T)$. Thus, $\tilde{g}$ results in a bandlimited lattice function. Our main approach in the study of the spectral characteristics of lattice functions is based on initially assuming bandlimitedness. As a first step towards understanding the mapping performed by A/D converters, the question arises as to whether we can characterize bandlimited integral-valued functions. Due to the connection between polynomials and bandlimited functions, we will first discuss in the next section integral-valued polynomials that will provide insight into the properties of the set of lattice functions.


\section{Integral-Valued Bandlimited Functions in $L^2({\mathbb{R}})$}
\label{section:IVBandlimited}
This section provides insights into the set of functions an A/D converter implicitly converts to, i.e. the underlying bandlimited signals used in discrete-time LTI processing of continuous-time signals. In fact, we will see, by cardinality arguments, that this set of functions is very limited as compared to the common assumption of considering the entire set of bandlimited functions of a specified bandwidth. The latter part of the section illustrates this set of functions by providing an example of a subset with the same cardinality.

Without loss of generality, we first explore the properties of bandlimited lattice functions assuming the lattice points have integer coordinates, i.e. integral-valued bandlimited functions. We initially focus on polynomials. This motivation lies in the fact that continuous bandlimited functions bear a resemblance to polynomials in the sense that they admit a factorization based on their roots in $\mathbb{C}$ due to Hadamard's factorization theorem \cite[Chapter 5]{Stein:2003aa}. Roughly speaking, bandlimited functions can be seen as infinite-degree polynomials and may thus be viewed as a limiting case of finite-degree polynomials. Then, we use this set of integral-valued polynomials to prove the countability of the set of bandlimited lattice functions providing some examples that belong to this set. 

\subsection{Integral-Valued Polynomials}
\label{section:IVPolynomials}
 Our interest lies in polynomials that take integer values at the integers \cite[Chapter 1]{Stanley:2012aa}. In particular, we introduce a characterization of these polynomials in terms of the \textit{difference operator} defined for a function $f:\mathbb{C}\rightarrow\mathbb{C}$ as $\Delta f(n)=f(n+1)-f(n)$ for $n\in\mathbb{Z}$. It can be shown that by applying it $k$ times, we obtain the \textit{$k$-th difference operator}
\begin{equation}
\Delta^{k}f(n)=\Delta(\Delta^{k-1}f(n))=\sum_{i=0}^{k}(-1)^{k-i}\binom{k}{i}f(n+i).
\end{equation}
If we choose $n=0$ we arrive at
\begin{equation}
\label{eq:kdifference0}
\Delta^{k}f(0)=\sum_{i=0}^{k}(-1)^{k-i}\binom{k}{i}f(i)
\end{equation}
which expresses $\Delta^kf(0)$ in terms of the values $f(0),f(1),\ldots,f(k)$. From (\ref{eq:kdifference0}), we can observe that $\Delta^{k}f(0)=0$ for $k>N$ if the function $f$ is a polynomial $P$ of degree $N$. It is also possible to reverse (\ref{eq:kdifference0}) and write $f(n)$ in terms of the finite differences at zero in the following way
\begin{equation}
\label{eq:fnvalues}
f(n)=\sum_{k=0}^{n}\binom{n}{k}\Delta^kf(0)
\end{equation}
for $n\geq0$.

The next result \cite[Corollary 1.9.3]{Stanley:2012aa} provides necessary and sufficient conditions for a polynomial to be integral valued.

\begin{proposition}
\label{prop:intpolys}
Let $P:\mathbb{C}\to\mathbb{C}$ be a polynomial of degree $N$. Then, $P(n)\in\mathbb{Z}$ for all $n\in\mathbb{Z}$ if and only if $\Delta^kP(0)\in\mathbb{Z}$, $0\leq k\leq N$.
\end{proposition}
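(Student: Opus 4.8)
The plan is to run both directions off the two finite-difference identities already in hand: equation~(\ref{eq:kdifference0}), which expresses $\Delta^k P(0)$ as a $\mathbb{Z}$-linear combination of $P(0),\dots,P(k)$, and equation~(\ref{eq:fnvalues}), which inverts that relation, together with the remark that $\Delta^k P(0)=0$ for $k>N$ when $\deg P = N$.

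The forward implication is essentially free. If $P(n)\in\mathbb{Z}$ for every $n\in\mathbb{Z}$, then for each $k$ with $0\le k\le N$ the right-hand side of~(\ref{eq:kdifference0}) is a sum of the integers $P(0),\dots,P(k)$ with integer coefficients $\pm\binom{k}{i}$, so $\Delta^kP(0)\in\mathbb{Z}$.

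For the converse I would first promote~(\ref{eq:fnvalues}) to the genuine polynomial identity
\[
P(x)=\sum_{k=0}^{N}\binom{x}{k}\,\Delta^kP(0),\qquad \binom{x}{k}:=\frac{x(x-1)\cdots(x-k+1)}{k!}\ (k\ge 1),\quad \binom{x}{0}:=1,
\]
valid for all $x\in\mathbb{C}$. This holds because, for a non-negative integer $n$, the vanishing of $\Delta^kP(0)$ for $k>N$ and of $\binom{n}{k}$ for $k>n$ lets us rewrite~(\ref{eq:fnvalues}) as $P(n)=\sum_{k=0}^{N}\binom{n}{k}\Delta^kP(0)$; hence the two polynomials of degree $\le N$ on the two sides agree at the $N+1$ points $x=0,1,\dots,N$ and therefore coincide identically. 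Now observe that $\binom{n}{k}\in\mathbb{Z}$ for \emph{every} integer $n$: for $n\ge 0$ it is an ordinary binomial coefficient, and for $n=-m$ with $m>0$ one has $\binom{-m}{k}=(-1)^k\binom{m+k-1}{k}\in\mathbb{Z}$. Consequently, if $\Delta^kP(0)\in\mathbb{Z}$ for $0\le k\le N$, then for any $n\in\mathbb{Z}$ the displayed identity exhibits $P(n)$ as a $\mathbb{Z}$-linear combination of integers, so $P(n)\in\mathbb{Z}$.

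The one place that needs care is the negative integers, since~(\ref{eq:fnvalues}) is stated only for $n\ge 0$; the polynomial-identity step above is precisely what bridges that gap. As an alternative to it, one could argue by induction on $N$: the base case $N=0$ is trivial, and for the inductive step $\Delta P$ has degree $N-1$ with $\Delta^k(\Delta P)(0)=\Delta^{k+1}P(0)\in\mathbb{Z}$ for $0\le k\le N-1$, so by the induction hypothesis $\Delta P$ is integral valued, whereupon integrality of $P$ propagates from $n\ge 0$ down to the negative integers through $P(n)=P(n+1)-\Delta P(n)$. Either route finishes the proof; I would take the polynomial identity, as it also makes transparent that $\{\binom{x}{k}\}_{k=0}^{N}$ is a $\mathbb{Z}$-basis for the integral-valued polynomials of degree $\le N$.
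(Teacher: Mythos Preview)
Your argument is correct in both directions. The forward implication is immediate from~(\ref{eq:kdifference0}), and for the converse you correctly upgrade~(\ref{eq:fnvalues}) to the polynomial identity $P(x)=\sum_{k=0}^{N}\binom{x}{k}\Delta^kP(0)$ by matching the two degree-$\le N$ polynomials at $x=0,1,\dots,N$, and then invoke the integrality of $\binom{n}{k}$ for negative $n$ via $\binom{-m}{k}=(-1)^k\binom{m+k-1}{k}$. The alternative induction you sketch is also sound.

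As for comparison with the paper: the paper does not actually prove Proposition~\ref{prop:intpolys}. It is stated with a citation to \cite[Corollary~1.9.3]{Stanley:2012aa} and used as a black box thereafter; the surrounding text only sets up the identities~(\ref{eq:kdifference0}) and~(\ref{eq:fnvalues}) and remarks that $\Delta^kP(0)=0$ for $k>N$. Your write-up is precisely the standard proof one finds in that reference, built from exactly those ingredients, so there is nothing to contrast---you have simply supplied what the paper outsourced. The extra observation that $\{\binom{x}{k}\}_{k=0}^{N}$ is a $\mathbb{Z}$-basis for the integral-valued polynomials of degree $\le N$ is a nice bonus and is consistent with the paper's later remark that such polynomials have rational (not necessarily integer) coefficients.
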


A sufficient condition for a polynomial to take integer values at the integers is considering integer coefficients. However, the backward assumption in Proposition \ref{prop:intpolys} is more general than the latter. In particular, the coefficients are not required to be integers and, in fact, a closer inspection of (\ref{eq:fnvalues}) reveals that they may belong to $\mathbb{Q}$.

\begin{example}
\normalfont Consider the polynomial $P(x)=x(x-1)/2$ where $\Delta P(0)=0$ and $\Delta^{2}P(0)=1$. By Proposition \ref{prop:intpolys}, the polynomial is integral valued. Indeed, we can use (\ref{eq:fnvalues}) to arrive at the expression
\begin{equation}
P(n)=\sum_{k=0}^{2}\binom{n}{k}\Delta^{k}P(0)=\binom{n}{2}=\frac{n(n-1)}{2}
\end{equation}
where on its right-hand side, we recognize the identity of the sum of the first $n-1$ positive integers. Since for this particular example the polynomial is an even function, we conclude that $P(n)\in\mathbb{Z}$ for $n\in\mathbb{Z}$.
\end{example}

\subsection{Integral-Valued Bandlimited Functions in $L^2({\mathbb{R}})$}
We use the properties of integral-valued polynomials to prove results about the cardinality of lattice functions. We first show that the entire space of functions in $L^2(\mathbb{R})$ and bandlimited to some $\sigma>0$ rad/s cannot be mapped in an unambiguous fashion to a the set of integral-valued polynomials.

\begin{proposition}
\label{prop:PolyBLcountability}
There does not exist a bijection between the set of integral-valued polynomials and the space of functions bandlimited to some $\sigma>0$.
\end{proposition}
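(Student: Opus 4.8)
The plan is a counting argument: I would show that the set of integral-valued polynomials is countably infinite whereas $\mathcal{B}_\sigma$ is uncountable, so that no bijection between them can exist.

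For the countability, I would stratify by degree. Fix $N\ge 0$ and let $\mathcal{P}_N$ denote the set of integral-valued polynomials of degree at most $N$. For $P\in\mathcal{P}_N$ of actual degree $d\le N$, Proposition~\ref{prop:intpolys} gives $\Delta^kP(0)\in\mathbb{Z}$ for $0\le k\le d$, while $\Delta^kP(0)=0$ for $d<k\le N$ as noted below (\ref{eq:kdifference0}); hence
\begin{equation*}
P\;\longmapsto\;\bigl(\Delta^0P(0),\Delta^1P(0),\dots,\Delta^NP(0)\bigr)
\end{equation*}
maps $\mathcal{P}_N$ into $\mathbb{Z}^{N+1}$. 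This map is injective: by (\ref{eq:kdifference0}) the tuple $(\Delta^kP(0))_{k=0}^{N}$ determines the values $P(0),P(1),\dots,P(N)$ --- equivalently one recovers $P(n)$ via (\ref{eq:fnvalues}) --- and a polynomial of degree at most $N$ is pinned down by its values at the $N+1$ distinct points $0,1,\dots,N$. Hence $|\mathcal{P}_N|\le|\mathbb{Z}^{N+1}|=\aleph_0$, and since the set of all integral-valued polynomials equals $\bigcup_{N\ge 0}\mathcal{P}_N$, it is a countable union of countable sets and therefore countable. (One can upgrade this to an honest bijection with $\bigcup_{N}\mathbb{Z}^{N+1}$ using the converse direction of Proposition~\ref{prop:intpolys} together with $\binom{n}{k}\in\mathbb{Z}$ for every $n\in\mathbb{Z}$, but only injectivity is needed here.)

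For the uncountability of $\mathcal{B}_\sigma$, I would exhibit a continuum-sized family inside it: fix any nonzero $\varphi\in\mathcal{B}_\sigma$ --- for instance a suitably scaled $\mathrm{sinc}$ --- and note that $\{c\varphi:c\in\mathbb{C}\}\subseteq\mathcal{B}_\sigma$ with $c\mapsto c\varphi$ injective, so $|\mathcal{B}_\sigma|\ge 2^{\aleph_0}>\aleph_0$. Since the two sets have different cardinalities, no bijection between them exists, which is the assertion of the proposition.

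Everything above is routine; the only point that deserves a moment's care is the injectivity of $P\mapsto(\Delta^kP(0))_{k=0}^{N}$, which is exactly the invertibility of the finite-difference transform recorded in (\ref{eq:kdifference0})--(\ref{eq:fnvalues}) together with the fact that a polynomial of bounded degree is determined by finitely many samples. I do not anticipate a genuine obstacle.
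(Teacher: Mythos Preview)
Your proposal is correct and follows the same high-level strategy as the paper: show that the integral-valued polynomials form a countable set while $\mathcal{B}_\sigma$ is uncountable, so no bijection can exist. The execution differs in both halves, though. For countability, the paper observes via (\ref{eq:fnvalues}) and Proposition~\ref{prop:intpolys} that every integral-valued polynomial has rational coefficients, identifies such polynomials with finite sequences over a countable alphabet, and then invokes the unique-prime-factorization map $(a_0,\dots,a_N)\mapsto p_0^{a_0}\cdots p_N^{a_N}-1$ to embed these sequences into $\mathbb{N}$; your finite-difference embedding $P\mapsto(\Delta^kP(0))_{k=0}^{N}\in\mathbb{Z}^{N+1}$ is more direct and avoids the detour through $\mathbb{Q}$. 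For uncountability, the paper identifies $\mathcal{B}_\sigma$ with $\ell^2(\mathbb{C})$ as a separable Hilbert space and appeals to Cantor's diagonal argument, whereas your one-line injection $c\mapsto c\varphi$ from $\mathbb{C}$ is more elementary and sidesteps any Hilbert-space machinery. Both routes are standard; yours is slightly leaner.
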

\begin{proof}
The space of functions bandlimited to some $\sigma>0$ is a separable Hilbert space, thus it can be identified with $l^2(\mathbb{C})$. By Cantor's diagonal argument \cite[Theorem 2.14]{Rudin:1976aa}, the set of all square-summable sequences is uncountable, thus the aforementioned space of bandlimited functions is uncountable.

By (\ref{eq:fnvalues}) and Proposition \ref{prop:intpolys}, we see that the set of integral-valued polynomials is a subset of the set of polynomials with rational coefficients, i.e. $P(z)=a_0+\ldots+a_Nz^N$ where $a_i\in\mathbb{Q}$ for all $0\leq i\leq N$ and $z\in\mathbb{C}$. We can identify the latter with the set of finite sequences of the form $S=\{(a_0,\ldots,a_N):a_k\in\mathbb{N}\ \mathrm{and}\ N\geq0\}$ noting that $\mathbb{Q}$ is a countable set. Then, a subset $I\subset S$ can be identified with integral-valued polynomials. Consider now the following function
\begin{IEEEeqnarray}{CCCC}
\label{eq:PolyToNat}
\rho:&S&\to&\mathbb{N}\nonumber\\
&(a_0,\ldots,a_N)&\mapsto&p_0^{a_0}\cdot\ldots\cdot p_n^{a_N}-1
\end{IEEEeqnarray}
where $p_i$ is the $(i+1)$-th prime number. The fundamental theorem of arithmetic---i.e. the unique-prime-factorization theorem---implies that the expression in (\ref{eq:PolyToNat}) is a bijection between finite sequences and natural numbers. We can now state that the set $S$ is countable and so is $I$ since $I\subset S$. This implies that the set of integral-valued polynomials is countable. Therefore, there does not exist a bijection between integral-valued polynomials and functions bandlimited to some $\sigma>0$.
\hfill$\square$\end{proof}

In other words, the previous proposition indicates that the \textit{size} of square-integrable bandlimited functions is larger than that of integral-valued polynomials. The previous result was, up to some extent, expected. However, we can intuitively expect that the cardinality of integral-valued bandlimited functions in $L^2(\mathbb{R})$ is the same as integral-valued polynomials. In effect, we show this in the next proposition by demonstrating that the former set is countable.

\begin{proposition}
\label{prop:CountabilityIntBL}
Consider the set 
\begin{equation}
\mathcal{B}_{\pi}^\mathbb{Z}=\{f(t)\in \mathcal{B}_{\pi}:\ f(n)\in\mathbb{Z}\ \textrm{for all}\ n\in\mathbb{Z}\}.
\end{equation}
Then, the set $\mathcal{B}_{\pi}^\mathbb{Z}$ is countable. Moreover, if $f(t)\in \mathcal{B}_{\pi}^\mathbb{Z}$, then $\lim_{|t|\to\infty}f(t)=0$.
\end{proposition}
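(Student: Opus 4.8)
The plan is to reduce the statement, via the classical sampling theorem, to an elementary fact about square-summable sequences of integers. Since $f\in\mathcal{B}_\pi$, its Fourier transform $\hat f$ is supported on $[-1/2,1/2]$ and lies in $L^2$; the numbers $f(n)=\int_{-1/2}^{1/2}\hat f(\xi)e^{i2\pi\xi n}\,\mathrm{d}\xi$ are (up to a reflection in the index) precisely the Fourier coefficients of $\hat f$ regarded as a function on the interval $[-1/2,1/2]$, so Parseval's identity gives $\sum_{n\in\mathbb{Z}}|f(n)|^2=\|\hat f\|_{L^2}^2=\|f\|_{L^2}^2<\infty$. In particular the sample sequence $\{f(n)\}_{n\in\mathbb{Z}}$ belongs to $l^2(\mathbb{Z})$, and the sampling map $f\mapsto\{f(n)\}_{n\in\mathbb{Z}}$ is injective on $\mathcal{B}_\pi$ (a function in $\mathcal{B}_\pi$ vanishing at every integer is identically zero, by Shannon--Whittaker reconstruction).

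The key step is the observation that a square-summable sequence of integers is finitely supported: if $\sum_{n}|f(n)|^2<\infty$ and every $f(n)\in\mathbb{Z}$, then $|f(n)|^2\in\{0,1,4,9,\dots\}$, so $f(n)=0$ for all but finitely many $n$. Hence the image of $\mathcal{B}_{\pi}^\mathbb{Z}$ under the sampling map is contained in the set of finitely supported integer sequences, which is a countable union over the support and its location of copies of $\mathbb{Z}^k$, and is therefore countable. Combining this with injectivity of the sampling map yields that $\mathcal{B}_{\pi}^\mathbb{Z}$ is countable.

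For the limiting behaviour, fix $f\in\mathcal{B}_{\pi}^\mathbb{Z}$ and let $F\subset\mathbb{Z}$ be the (finite) set where $f(n)\neq0$. The function $h(t)=\sum_{n\in F}f(n)\,\mathrm{sinc}(t-n)$ lies in $\mathcal{B}_\pi$ and agrees with $f$ at every integer, so by injectivity $f=h$. Each term $\mathrm{sinc}(t-n)=\sin\pi(t-n)/\bigl(\pi(t-n)\bigr)$ tends to $0$ as $|t|\to\infty$, and a finite sum of such terms does as well; hence $\lim_{|t|\to\infty}f(t)=0$.

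The argument is short and the only point that needs genuine care is the first one, namely justifying that the integer samples of an $L^2$ bandlimited function are square-summable, i.e. identifying $\{f(n)\}_{n\in\mathbb{Z}}$ with the Fourier coefficients of $\hat f\in L^2\bigl([-1/2,1/2]\bigr)$ and applying Parseval. Once square-summability of the samples is in hand, both conclusions follow from the integrality constraint together with elementary counting and the decay of $\mathrm{sinc}$.
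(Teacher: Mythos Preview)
Your proof is correct and follows essentially the same route as the paper: Parseval gives square-summability of the integer samples, integrality then forces finite support, the sampling map injects $\mathcal{B}_\pi^{\mathbb{Z}}$ into the countable set of finitely supported integer sequences, and the decay follows from writing $f$ as a finite sum of sinc translates. The only cosmetic difference is that the paper obtains finite support via a Cauchy-sequence argument on the integer-valued partial sums $S_N=\sum_{|n|\le N}|f(n)|^2$ (eventually $|S_N-S_M|<1$ forces equality), whereas you argue directly that the terms $|f(n)|^2\in\{0,1,4,\dots\}$ of a convergent series must vanish for large $|n|$; these are equivalent observations.
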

\begin{proof}
It can be shown that $S:=||f||_2^2=\sum_{n\in\mathbb{Z}}|f(n)|^2<\infty$ where $S>0$ \cite[Chapter 6]{Papoulis:1977ab}. Let us denote the partial sums as $S_N:=\sum_{n=-N}^N|f(n)|^2$. The ring of integers is closed under addition and multiplication, thus $S_N$ is a nonnegative integer. Note that $S_N\rightarrow S$ is a convergent sequence in $\mathbb{R}$, thus it is a Cauchy sequence. This implies that for the particular choice $0<\epsilon<1$, there always exists an integer $N_o>0$ such that $|S_N-S_M|<\epsilon<1$ for all $N,M\geq N_o$. Since $S_N,S_M$ are nonnegative integers, this implies that $S_N=S_M$ for $N,M\geq N_o$ and we have that $f(n)=0$ for $|n|\geq N_o+1$. By the sampling theorem \cite{Whittaker:1915aa,Kotelnikov:1933aa,Shannon:1949aa}, we can then write $f(t)=\sum_{n\in\Lambda}f(n)\mathrm{sinc}(t-n)$ for some finite set $\Lambda\subset\mathbb{Z}$. Then, we can identify $\mathcal{B}_{\pi}^\mathbb{Z}$ with finite sequences of the form $(a_0,\ldots,a_{2N})$ where $a_k\in\mathbb{Z}$ for $0\leq k\leq2N$. As shown in the proof of Proposition \ref{prop:PolyBLcountability}, this set of finite sequences with integer-valued elements is countable and the conclusion follows. As a consequence, since $\mathrm{sinc}(\cdot)\in\mathcal{O}(1/t)$, it follows that $f\in\mathcal{O}(1/t)$ which shows that $\lim_{|t|\to\infty}f(t)=0$.
\hfill$\square$\end{proof}

It is shown in Proposition \ref{prop:CountabilityIntBL} that the functions in $\mathcal{B}_{\pi}^\mathbb{Z}$ possess infinitely many zeros at the integers since they decay at infinity and they are forced to take integer values, or equivalently, they take nonzero values at the integers only at a finite number of times. Additionally, it is clear that $\mathcal{B}_\sigma^{\mathbb{Z}}$ does not form a dense subset within $\mathcal{B}_\sigma$. This would be true for $\mathcal{B}_\sigma^{\mathbb{Q}}$ for example.

We can build upon the idea of finitely many zeros to construct an infinite countable subset of $\mathcal{B}_{\pi}^\mathbb{Z}$ that can help understand the set $\mathcal{B}_{\pi}^\mathbb{Z}$. It is possible to construct a family of these functions without resorting to the canonical series provided by the classical sampling theorem. Consider the sine function divided by an appropriate function to force the decay at infinity according so that the resulting function belongs to $L^2(\mathbb{R})$. At the same time, the resulting function should necessarily be an entire function since we want a bandlimited function. We can choose, for example, polynomials and construct a function like $\sin(\pi t)/P(t)$ where $P(t)$ is a finite-degree polynomial. It is necessary then that the zeros of the sine function cancel those of $P(t)$. Therefore, it is sufficient that this polynomial presents simple zeros at the integers.

In the following result, we show how a family of functions of the form $\sin(\pi t)/P(t)$ are contained in $\mathcal{B}_\pi^{\mathbb{Z}}$ as an infinite countable proper subset. Let us first denote the least common multiple of a finite set $\mathcal{X}$ of natural numbers by $\mathrm{lcm}(\mathcal{X})$.

\begin{proposition}
\label{prop:FormLatticeFunctions}
Let $f$ be a function of the form 
\begin{equation}
\label{eq:BL_IntFuncs}
f(t)=\frac{\sin(\pi t)}{a\prod_{i=1}^{N}(t-t_i)}
\end{equation}
where $N>0$ and $\{t_i\}_{i=1}^{N}\subset\mathbb{Z}$. Then, $f(t)\in \mathcal{B}_\pi$. Moreover, $f(t)$ is integral valued if and only if
\begin{equation}
\label{eq:LCMleadingcoeff}
|a|^{-1}=k\cdot\mathrm{lcm}(\{\prod_{i\neq j}|t_i-t_j|:1\leq j\leq N\})
\end{equation}
for some integer $k>0$.
\end{proposition}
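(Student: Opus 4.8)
The plan is to treat the two assertions separately, and I would open by recording the standing assumption that the $t_i$ are \emph{distinct} integers: if two coincided, the denominator would vanish to order $\geq 2$ at that integer while $\sin(\pi t)$ vanishes only to first order, so $f$ would have a genuine pole and could not lie in $\mathcal{B}_\pi$; distinctness also makes every factor $\prod_{i\ne j}|t_i-t_j|$ a positive integer, which the second assertion needs.

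For $f\in\mathcal{B}_\pi$ I would verify the three ingredients in the Paley--Wiener characterization quoted in the introduction. First, $f$ is \emph{entire}: $\sin(\pi t)$ has a simple zero at each integer, so each simple pole of $1/\prod(t-t_i)$ at $t=t_i$ is cancelled and the singularity is removable. Second, $f$ is of \emph{exponential type $\pi$}: from $|\sin(\pi z)|\le e^{\pi|\mathrm{Im}\,z|}\le e^{\pi|z|}$ and the fact that the degree-$N$ denominator grows like $|z|^N$, one gets $|f(z)|\le e^{\pi|z|}$ for all large $|z|$, and $f$ is bounded on compact sets by continuity. Third, $f\in L^2(\mathbb{R})$: on $\mathbb{R}$ one has $|\sin(\pi t)|\le 1$ and $|\prod(t-t_i)|\sim|t|^N$ with $N\ge1$, so $f(t)=O(|t|^{-1})$ and $|f|^2$ is integrable at infinity, while near the removable singularities $f$ is bounded. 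Combining these, Paley--Wiener gives $f\in\mathcal{B}_\pi$ (which also illustrates the decay asserted in Proposition~\ref{prop:CountabilityIntBL}).

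For the integral-valued characterization I would evaluate $f$ at an arbitrary integer $n$. If $n\notin\{t_1,\dots,t_N\}$, then $\sin(\pi n)=0$ with nonzero denominator, so $f(n)=0\in\mathbb{Z}$ and no constraint appears. If $n=t_j$, the value at the removable singularity is computed by L'H\^opital's rule: writing $P(t)=a\prod_i(t-t_i)$, so that $P(t_j)=0$ and $P'(t_j)=a\prod_{i\ne j}(t_j-t_i)$, one obtains
\begin{equation}
f(t_j)=\frac{(-1)^{t_j}}{a\,\prod_{i\ne j}(t_j-t_i)},
\end{equation}
the constant produced by differentiating $\sin(\pi t)$ being absorbed into $a$. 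Hence $f$ is integral valued iff $f(t_j)\in\mathbb{Z}$ for every $j$. Setting $n_j:=\prod_{i\ne j}|t_i-t_j|\in\mathbb{Z}_{>0}$, each $|f(t_j)|$ equals $|a|^{-1}/n_j$; integrality first forces $a\in\mathbb{R}\setminus\{0\}$ (otherwise $f(t_j)$ is non-real) and then is equivalent to $n_j\mid|a|^{-1}$ for all $j$, i.e. to $|a|^{-1}$ being a positive common multiple of $n_1,\dots,n_N$.

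The only genuinely arithmetic step is the last: the set of positive common multiples of a finite family of positive integers is precisely the set of positive-integer multiples of their least common multiple. Applied to $\{n_j\}$ this yields $|a|^{-1}=k\cdot\mathrm{lcm}(\{n_j:1\le j\le N\})$ for some integer $k>0$, which is (\ref{eq:LCMleadingcoeff}); conversely, any $a$ of that form makes each $n_j\mid|a|^{-1}$, hence each $f(t_j)\in\mathbb{Z}$ and $f$ integral valued. I do not expect a serious obstacle here --- the proposition is essentially careful bookkeeping --- but the points needing attention are (i) keeping the distinctness hypothesis in play throughout, (ii) passing from ``entire of exponential type $\pi$ and $L^2$ on $\mathbb{R}$'' to ``$\in\mathcal{B}_\pi$'' via Paley--Wiener rather than merely asserting finite type, and (iii) ruling out complex $a$ before invoking the lcm argument.
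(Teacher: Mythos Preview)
Your proposal follows essentially the same route as the paper: remove the singularities, bound the growth, check $L^2$, invoke Paley--Wiener, then compute $f(t_j)$ by L'H\^opital and reduce the integrality condition to a divisibility statement involving $\prod_{i\ne j}(t_j-t_i)$. The only noteworthy deviation is that the paper additionally proves the type is \emph{exactly} $\pi$ (by exhibiting points where $|f(z)|$ beats $e^{(\pi-\delta)|z|}$), whereas you stop at type $\le\pi$; your weaker bound is all that is needed for $f\in\mathcal{B}_\pi$, so nothing is lost. Your explicit remarks on distinctness of the $t_i$ and on forcing $a\in\mathbb{R}$ are welcome clarifications that the paper leaves implicit. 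One cosmetic point: your phrase ``the constant produced by differentiating $\sin(\pi t)$ being absorbed into $a$'' papers over a factor of $\pi$ from L'H\^opital; the paper's computation silently drops the same factor, so you are faithful to the statement as written, but it would be cleaner to track it explicitly.
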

\begin{proof}
The polynomial in the denominator has the form
\begin{equation}
P(t)=\prod_{i=1}^{N}(t-t_i)=\sum_{n=0}^Na_nt^n
\end{equation}
where $t_i\in\mathbb{Z}$ for $1\leq i\leq N$ and $t_i\neq t_j$ for $i\neq j$. It is clear that the function $f(z)$ has removable singularities at the zeros of $P(z)$, thus it is an entire function. 

Note that there exist a $C>0$  and $t_o>0$ such that $|P(t)|\geq C|t|$ for $t>t_o$. Then, we have the following
\begin{equation}
\int_{\mathbb{R}}|f(t)|^2\mathrm{d}t\leq\int_{|t|>t_o}\Big|\frac{\sin(\pi t)}{\pi Ct}\Big|^2\mathrm{d}t+D<\infty
\end{equation}
for some $0<D<\infty$. Then, it is immediate to see that $f(t)\in L^2(\mathbb{R})$.

Let us denote the type of $f$ by $\sigma$. The function $\sin(\pi z)$ is of type $\pi$, thus using Euler's identity we have that $\sigma\leq\pi$.
It is straightforward to see that for every $0<\delta<\pi$ and $C'>0$ there exist a large enough $z_o\in\mathbb{C}$ such that
\begin{equation}
|f(z_o)|\geq \frac{Ce^{\pi |z_o|}}{A|z_o|^{n}}>C'e^{(\pi-\delta)|z_o|}
\end{equation}
where $A=\sum_{n=0}^N|a_n|$. This implies that $\sigma\geq\pi$, and consequently, $\sigma=\pi$. By the Paley-Wiener Theorem \cite[Chapter 4, Theorem 3.3]{Stein:2003aa}\cite[Theorem X]{Paley:1934aa}, the function $f(t)\in L^2(\mathbb{R})\cap\mathcal{C}^{0}(\mathbb{R})$ is bandlimited to $[-\pi,\pi]$.

The function $f$ has removable singularities at $\{t_i\}_{i=1}^{N}$, thus we can define the values at these points as
\begin{equation}
\label{eq:PointDefSincLike}
f(t_i)=\lim_{t\to t_i}f(t)=\frac{\cos(\pi t_i)}{P'(t_i)}=\frac{(-1)^{t_i}}{P'(t_i)}
\end{equation}
where we have applied L'H\^opital's rule. Note that (\ref{eq:PointDefSincLike}) is well defined as $P'(t_i)\neq0$ for all $1\leq i\leq N$. In order to see this, consider some $1\leq j\leq N$ and write the derivative of the polynomial as
\begin{equation}
\label{eq:DerPoly}
P'(t)=\prod_{i\neq j}(t-t_i)+(t-t_j)\frac{\mathrm{d}}{\mathrm{d}t}\prod_{i\neq j}(t-t_i).
\end{equation}
Since the polynomial has simple roots, the first term in (\ref{eq:DerPoly}) will be different from zero for $t=t_j$ and the second term will vanish for all $1\leq j\leq N$. In view of the preceding, we can also write
\begin{equation}
f(t_i)=\frac{(-1)^{t_i}}{a\prod_{j\neq i}(t_i-t_j)}.
\end{equation}
Note that the product $\prod_{j\neq i}(t_i-t_j)$ is an integer for all $1\leq j\leq N$. In fact, it clear that it is necessary and sufficient that $|a|^{-1}$ is a multiple of the values $\{\prod_{i\neq j}|t_i-t_j|:1\leq i,j\leq N\}$ in order for $f$ to take integral values at the integers.
\hfill$\square$\end{proof}

We have seen that if we appropriately choose the roots of $P(t)$ and its leading coefficient $a$, the function $\sin(\pi t)/P(t)$ is an integral-valued bandlimited function in $L^2(\mathbb{R})$, i.e. a square-integrable lattice function for $\Delta=T=1$. Denote this family of functions by
\begin{equation}
\begin{split}
\mathcal{S}=\Big\{\frac{\sin(\pi t)}{\pi a\prod_{i=1}^{N}(t-t_i)}:\ &\mathrm{distinct}\ t_i\in\mathbb{Z}, N>0\\
& \mathrm{and}\ a\ \mathrm{satisfying\ (\ref{eq:LCMleadingcoeff})}\Big\}
\end{split}
\end{equation}
that, based on the previous result, satisfies $\mathcal{S}\subseteq \mathcal{B}_{\pi}^{\mathbb{Z}}$. Note again that the generalization to an arbitrary quantizer step and sampling rate can be performed by an appropriate time warping and scaling. The set $\mathcal{S}$ in effect forms a countable subset of square-integrable lattice functions as the next result shows.

\begin{corollary}
There exists a bijection between the set $\mathcal{B}_{\pi}^{\mathbb{Z}}$ and $\mathcal{S}$.
\end{corollary}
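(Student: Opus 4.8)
The plan is to deduce the bijection purely from cardinality, using the fact that any two countably infinite sets are equinumerous (each being in bijection with $\mathbb{N}$). We already know $\mathcal{S}\subseteq\mathcal{B}_{\pi}^{\mathbb{Z}}$, and Proposition \ref{prop:CountabilityIntBL} tells us $\mathcal{B}_{\pi}^{\mathbb{Z}}$ is countable, so $\mathcal{S}$ is countable as well. Hence the only point requiring an argument is that these sets are \emph{infinite}; once that is settled, both are countably infinite, and composing a bijection $\mathbb{N}\to\mathcal{S}$ with the inverse of a bijection $\mathbb{N}\to\mathcal{B}_{\pi}^{\mathbb{Z}}$ yields the desired bijection.

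To see that $\mathcal{S}$ is infinite I would exhibit an explicit infinite subfamily. Take $N=1$ and $t_1=0$ in (\ref{eq:BL_IntFuncs}); the set appearing in (\ref{eq:LCMleadingcoeff}) then consists of a single empty product equal to $1$, so $\mathrm{lcm}(\cdot)=1$ and condition (\ref{eq:LCMleadingcoeff}) reduces to $|a|^{-1}=k$ for some positive integer $k$. By Proposition \ref{prop:FormLatticeFunctions} this produces, for each $k\in\mathbb{N}$, the function $f_k(t)=\sin(\pi t)/(\pi k^{-1}t)=k\,\mathrm{sinc}(t)\in\mathcal{S}$ (or one checks membership directly, since $k\,\mathrm{sinc}(n)\in\{0,k\}\subset\mathbb{Z}$ and $k\,\mathrm{sinc}\in\mathcal{B}_{\pi}$). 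As $f_k(0)=k$, the functions $f_k$ are pairwise distinct, so $\mathcal{S}$ is infinite; being an infinite subset of the countable set $\mathcal{B}_{\pi}^{\mathbb{Z}}$, it is countably infinite, and in particular $\mathcal{B}_{\pi}^{\mathbb{Z}}$ itself is countably infinite.

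Combining the two paragraphs finishes the proof. If one prefers to stay closer to the established inclusion, one can instead observe that the inclusion map is an injection $\mathcal{S}\hookrightarrow\mathcal{B}_{\pi}^{\mathbb{Z}}$, that $|\mathcal{S}|=|\mathcal{B}_{\pi}^{\mathbb{Z}}|=\aleph_0$ furnishes an injection in the reverse direction, and then invoke the Cantor--Schr\"oder--Bernstein theorem. I do not expect any genuine obstacle here: the only step doing real work is ruling out that the sets are finite, since a finite set and a countably infinite set admit no bijection, and the family $\{k\,\mathrm{sinc}(t)\}_{k\in\mathbb{N}}$ disposes of that at once.
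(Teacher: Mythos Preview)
Your proof is correct and follows essentially the same cardinality argument as the paper: both conclude the bijection from the fact that $\mathcal{S}$ and $\mathcal{B}_{\pi}^{\mathbb{Z}}$ are countably infinite, the paper obtaining countability of $\mathcal{S}$ by parametrizing it with finite integer tuples $(k,t_1,\ldots,t_N)$ while you obtain it from the inclusion $\mathcal{S}\subseteq\mathcal{B}_{\pi}^{\mathbb{Z}}$. Your version is in fact slightly more careful, since you explicitly verify infinitude via the family $\{k\,\mathrm{sinc}(t)\}_{k\in\mathbb{N}}$, a point the paper leaves implicit.
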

\begin{proof}
The set $\mathcal{S}$ can be identified with the set of sequences of the form $(k,t_1,t_2,\ldots,t_N)$ where $k$ is some positive integer as in (\ref{eq:LCMleadingcoeff}). By the same argument presented in Proposition \ref{prop:CountabilityIntBL}, $\mathcal{S}$ is countable. Since $\mathcal{B}_{\pi}^{\mathbb{Z}}$ is also countable, both sets can be identified, i.e. there exist a bijection between them.
\hfill$\square$\end{proof}

Roughly speaking, the \textit{size} of $\mathcal{S}$ is the same as $\mathcal{B}_{\pi}^{\mathbb{Z}}$, however, $\mathcal{S}$ lives within $\mathcal{B}_{\pi}^{\mathbb{Z}}$ as a proper subset, i.e. $\mathcal{S}\subset \mathcal{B}_{\pi}^{\mathbb{Z}}$ where the inclusion is strict. In order to see this, let us illustrate it in the following example.

\begin{example}
\label{ex:ExBLintSinc}
\normalfont Assume we have the functions
\begin{equation}
\label{eq:ExBLintSinc}
\begin{split}
g(t) = &\ 6\mathrm{sinc}(\pi(t+1))+5\mathrm{sinc}(\pi(t-2))+\\
& +\mathrm{sinc}(\pi(t+3))\\
f(t) = &\ \frac{\sin(\pi t)}{\pi(t+1)(t-2)(t+3)}
\end{split}
\end{equation}

where $f(t)\in\mathcal{S}$ and $g(t)\in \mathcal{B}_{\pi}^{\mathbb{Z}}$. Note that the function $f(t)$ satisfies the conditions imposed by Proposition \ref{prop:FormLatticeFunctions}, in fact, for any $k\in\mathbb{Z}$ we have that $kf(t)\in\mathcal{S}$. We can easily see in Fig.~\ref{fig:ExBLintSinc} that there does not exist such $k$ in such a way that both functions are equal for all $t\in\mathbb{R}$. Thus, this counter example clearly implies that $\mathcal{S}$ is a proper subset of $ \mathcal{B}_{\pi}^{\mathbb{Z}}$.
\end{example}

Proposition \ref{prop:FormLatticeFunctions} allows us to show in the next result an identity---proven from a sampling-theoretic point of view---involving functions of the form (\ref{eq:BL_IntFuncs}) and finite linear combinations of sinc functions when the coefficients are appropriately chosen.

\begin{figure}[!t]
\centering
\includegraphics[scale=0.45]{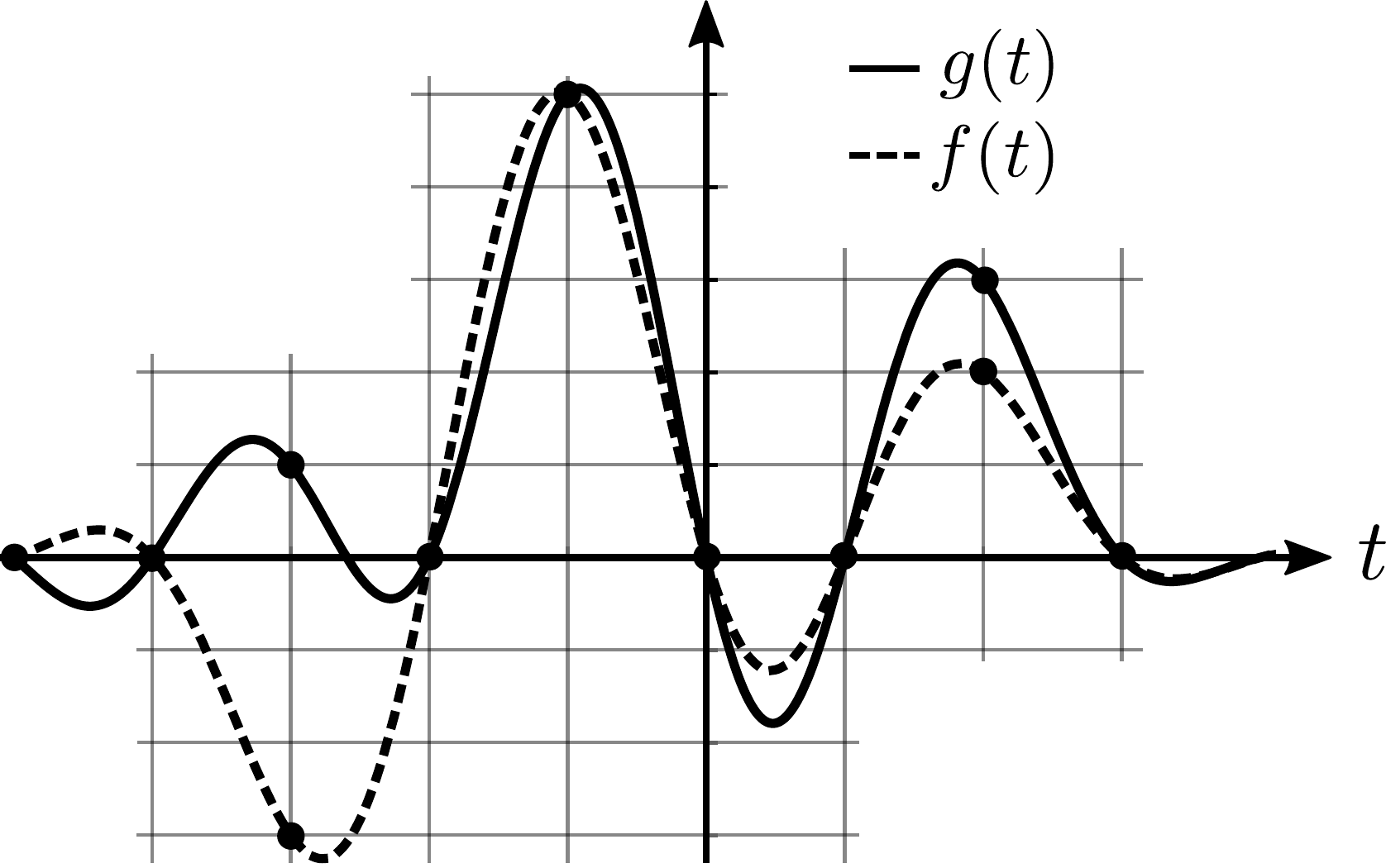}
\caption{Plot of the functions of Example \ref{ex:ExBLintSinc} where $f(t)\in\mathcal{S}$ and $g(t)\in\mathcal{B}_{\pi}^{\mathbb{Z}}$. The black dots represent crossings in the lattice grid formed at the integers in both axis, i.e. $\Delta=T=1$.}
\label{fig:ExBLintSinc}
\end{figure}

\begin{corollary}
Assume $P(t)=a\prod_{i=1}^{N}(t-t_i)$ where $|a|^{-1}$ satisfies (\ref{eq:LCMleadingcoeff}) and $\{t_i\}_{i=1}^{N}\subset\mathbb{Z}$. If the values $\{\gamma_i\}_{i=1}^{N}$ are chosen such that there exists a nonzero integer $k$ satisfying $\gamma_j=k(-1)^{t_j}/a\prod_{i\neq j}(t_j-t_i)$ for all $1\leq j\leq N$, then the following identity holds
\begin{equation}
\label{eq:SincIdentity}
\sum_{i=1}^{N}\gamma_i\mathrm{sinc}(t-t_i)=\frac{\sin(\pi t)}{k\pi P(t)}.
\end{equation}
\end{corollary}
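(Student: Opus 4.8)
The natural route is the classical sampling theorem: two functions in $\mathcal{B}_\pi$ that agree at every integer are identical, so it suffices to show that the two sides of (\ref{eq:SincIdentity}) have the same values on $\mathbb{Z}$. First I would verify that both sides indeed lie in $\mathcal{B}_\pi$. The right-hand side is a constant multiple of a function of the form (\ref{eq:BL_IntFuncs}) whose leading coefficient satisfies (\ref{eq:LCMleadingcoeff}), so Proposition \ref{prop:FormLatticeFunctions} places it in $\mathcal{B}_\pi\cap L^2(\mathbb{R})$. The left-hand side is a finite linear combination of translates $\mathrm{sinc}(\,\cdot - t_i\,)$, each of which belongs to $\mathcal{B}_\pi$, and $\mathcal{B}_\pi$ is a vector space closed under finite sums, so the left-hand side also lies in $\mathcal{B}_\pi\cap L^2(\mathbb{R})$. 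Hence the difference $D$ of the two sides is an element of $\mathcal{B}_\pi$, and by the sampling theorem \cite{Whittaker:1915aa,Kotelnikov:1933aa,Shannon:1949aa} it vanishes identically once we know that $D(n)=0$ for all $n\in\mathbb{Z}$.

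The second step is the evaluation of both sides at the integers. If $n\notin\{t_i\}_{i=1}^N$, the left-hand side is zero because $\mathrm{sinc}$ vanishes at every nonzero integer, and the right-hand side is zero because $\sin(\pi n)=0$ while $P(n)\neq 0$. If $n=t_j$ for some $j$, then, using that the $t_i$ are distinct, the only surviving term on the left is $\gamma_j\,\mathrm{sinc}(0)=\gamma_j$. On the right, $t_j$ is a removable singularity, and a single application of L'H\^opital's rule --- exactly the computation in (\ref{eq:PointDefSincLike})--(\ref{eq:DerPoly}), where $P'(t_j)=a\prod_{i\neq j}(t_j-t_i)\neq 0$ because $P$ has simple roots --- shows that the limit there is $\gamma_j$ by the hypothesized choice of the $\gamma_i$. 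Thus $D$ vanishes at every integer, and (\ref{eq:SincIdentity}) follows.

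The one point that needs care, and is really the only substantive ingredient beyond bookkeeping, is the step ``agreement on $\mathbb{Z}$ implies agreement on $\mathbb{R}$''. This is not valid for bandlimited functions in general without an integrability/decay hypothesis, which is precisely why one leans on the $L^2$-membership of the right-hand side established inside Proposition \ref{prop:FormLatticeFunctions} (and the trivial $L^2$-membership of a finite sinc combination on the left); with square-integrability in hand the sampling-series representation is legitimate and the uniqueness argument goes through. Everything else --- isolating the surviving term at $t=t_j$ on the left and the L'H\^opital evaluation on the right --- has already been carried out in the proof of Proposition \ref{prop:FormLatticeFunctions}, so no new estimates are required.
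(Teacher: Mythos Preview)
Your proposal is correct and follows essentially the same route as the paper's proof: verify that both sides lie in $\mathcal{B}_\pi$, check that they coincide on $\mathbb{Z}$ by splitting into the cases $n\notin\{t_i\}$ (both sides vanish) and $n=t_j$ (L'H\^opital on the right, single surviving sinc term on the left), and then invoke the sampling theorem to conclude equality on all of $\mathbb{R}$. The only addition is your explicit remark that the $L^2$ membership of both sides is what legitimizes the uniqueness step, a point the paper leaves implicit.
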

\begin{proof}
Both sides of expression (\ref{eq:SincIdentity}) vanish at $\mathbb{Z}\setminus\{t_i\}_{i=1}^{N}$. Since $a$ is chosen such that (\ref{eq:LCMleadingcoeff}) is satisfied, the values of the right-hand side of the previous expression at the roots of $P(t)$ are precisely $(-1)^{t_j}/a\prod_{i\neq j}(t_j-t_i)$ for $1\leq j\leq N$. The latter implies that the expression is valid for the integers. Moreover, since both sides are bandlimited to $[-\pi,\pi]$, by the sampling theorem \cite{Whittaker:1915aa,Kotelnikov:1933aa,Shannon:1949aa}, they agree for all $t\in\mathbb{R}$.
\hfill$\square$\end{proof}

\section{Integral-Valued Entire Functions}
\label{section:IVEntireFunctions}
As introduced above, the main results regarding the spectrum of integral-valued bandlimited functions are based upon the interpretation of bandlimited signals as entire functions of exponential type. Loosely speaking, bandlimited signals correspond to entire functions with an exponential growth on the whole complex plane. Moreover, the growth determines the bandwidth of the signal itself, or vice versa. Thus, we explore the growth of integral-valued entire functions that will as a result reveal the spectral properties of bandlimited lattice functions.

The first results that connect the rate of growth of integral-valued entire functions with their structure can be found in \cite{Polya:1915aa,Polya:1920aa,Hardy:1917aa}. In particular, it can be shown that if an entire function takes integer values for nonnegative integers and $\sigma<\log(2)$, this function has to be a polynomial \cite[Theorem 11]{Whittaker:1935aa}. This result was extended and refined in many instances in the literature \cite{Selberg:1941aa,Selberg:1941ab,Pisot:1942aa,Pisot:1946aa,Pisot:1946ab,Buck:1948aa}. We now state the result in \cite{Boas:1954aa} that we will be using later in connection with bandlimited signals.

\begin{theorem}
\label{thm:IntEFET}
Assume that $f(z)$ is an entire function of exponential type $\sigma$, with $f(n)\in\mathbb{Z}$ for integers $n\geq0$. If the type satisfies
\begin{equation}
\sigma<\Big|\log\Big(\frac{3}{2}+\frac{i\sqrt{3}}{2}\Big)\Big|=0.7588\ldots
\end{equation}
then $f(z)$ is of the form $P_0(z)+P_1(z)2^z$, where $P_0$ and $P_1$ are polynomials. Moreover, if $\sigma<0.8$, then
\begin{equation}
\label{eq:IntExpression}
\begin{split}
f(z) =\ &P_0(z)+P_1(z)2^z+P_2(z)\Big(\frac{3}{2}+\frac{i\sqrt{3}}{2}\Big)^z+\\
&+P_3(z)\Big(\frac{3}{2}-\frac{i\sqrt{3}}{2}\Big)^z.
\end{split}
\end{equation}
\end{theorem}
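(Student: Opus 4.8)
The plan is to pass to the generating function $\phi(w)=\sum_{n\ge 0}f(n)w^{n}$, show that it is a rational function whose poles form a very short list, and read off the structural form of $f$ from it. Since $f$ has exponential type $\sigma$, one has $|f(n)|\le C_{\varepsilon}e^{(\sigma+\varepsilon)n}$, so $\phi$ is holomorphic in $|w|<e^{-\sigma}$, and since $f(n)\in\mathbb{Z}$ its Taylor coefficients at the origin are integers. I would first record the easy converse bookkeeping. If $\phi$ is rational with $\phi(\infty)=0$ and poles only at $1,\tfrac12$ (resp.\ at $1,\tfrac12,\alpha^{-1},\bar\alpha^{-1}$, writing $\alpha=\tfrac32+i\tfrac{\sqrt3}{2}$), then partial fractions together with $(1-w\beta)^{-\ell}=\sum_{n}\binom{n+\ell-1}{\ell-1}\beta^{n}w^{n}$ give $f(n)=\sum_{j}R_{j}(n)\beta_{j}^{\,n}$ with polynomials $R_{j}$ and $\beta_{j}\in\{1,2\}$ (resp.\ $\{1,2,\alpha,\bar\alpha\}$); the entire function $\sum_{j}R_{j}(z)\beta_{j}^{\,z}$ (principal branch, legitimate since $|\log\beta_{j}|<\pi$) then has exponential type $\le\sigma<\pi$ and agrees with $f$ on the nonnegative integers, so Carlson's uniqueness theorem forces it to equal $f$ --- which is exactly the claimed form, with $P_{0},P_{1}$ (resp.\ $P_{0},\dots,P_{3}$) the polynomials $R_{j}$. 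Thus it is enough to prove that $\phi$ is rational with the prescribed poles.

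For the continuation of $\phi$ I would use the Laplace--Borel representation $f(z)=\frac{1}{2\pi i}\oint_{\gamma}F(u)e^{uz}\,du$, where $F$ is the Borel transform of $f$, holomorphic off the conjugate indicator diagram $\overline D$, a compact convex subset of $\{|u|\le\sigma\}$. Interchanging summation and integration gives $\phi(w)=\frac{1}{2\pi i}\oint_{\gamma}\frac{F(u)}{1-we^{u}}\,du$ for small $|w|$; deforming $\gamma$ and using that $\exp$ is injective on $\overline D$ (its diameter is at most $2\sigma<2\pi$, so $\exp(-\overline D)$ does not wind around $0$) shows that $\phi$ continues to a single-valued function, holomorphic on $\widehat{\mathbb{C}}\setminus\exp(-\overline D)$ and vanishing at $\infty$. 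In particular the singular set of $\phi$ is confined to the compact set $\exp(-\overline D)\subseteq\exp\!\big(\{|u|\le\sigma\}\big)$, which for $\sigma<0.8$ is ``small'', contained in the annulus $e^{-\sigma}\le|w|\le e^{\sigma}$ and not separating $0$ from $\infty$.

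The core of the argument is then a rigidity theorem of P\'olya on power series with integer coefficients: under a smallness condition on the singular set --- here supplied by $\sigma<0.8$ through control of the capacity (transfinite diameter) of $\exp(-\overline D)$ --- such a series must be a rational function. Granting $\phi=P/Q$ rational, a standard lemma (going back to Fatou) lets us take $P,Q\in\mathbb{Z}[w]$ with $Q(0)=1$, so the poles of $\phi$ are exactly the reciprocals of the roots of the monic integer polynomial $w^{\deg Q}Q(1/w)$; equivalently, every pole is $\beta^{-1}$ for an algebraic integer $\beta$ \emph{all of whose conjugates are again reciprocals of poles of $\phi$}. Since each pole lies in $\exp(-\overline D)$ we get $\beta=e^{u_{0}}$ with $|u_{0}|\le\sigma$, and likewise for each conjugate of $\beta$; thus $\beta$ is an algebraic integer all of whose conjugates $\beta'$ satisfy $|\log\beta'|\le\sigma$ (principal branch, forced since $\sigma<\pi$). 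The decisive final step is the classification of such $\beta$: once one knows that only finitely many degrees can occur (the admissible region being small enough in capacity), a short case analysis shows that the only algebraic integers all of whose conjugates lie in $\exp(\{|u|<0.8\})$ are $1$, $2$, and $\tfrac32\pm i\tfrac{\sqrt3}{2}$, while already inside $\exp(\{|u|<|\log\alpha|\})$, i.e.\ for $\sigma<0.7588\ldots$, only $1$ and $2$ survive --- the separating values being exactly $|\log 2|=0.693\ldots<|\log\alpha|=0.7588\ldots<|\log(1\pm i)|=0.858\ldots$, with $0.8$ chosen conveniently between the last two (below which no term $(1\pm i)^{z}$ can occur). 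Combining this with the bookkeeping of the first paragraph gives the two asserted forms.

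The step I expect to be the real obstacle is this rigidity-and-classification package. Making P\'olya's integer-coefficient theorem bite requires a genuine estimate of the capacity of $\exp(-\overline D)$ --- crudely enclosing $\exp(\{|u|\le\sigma\})$ in a disc is already too lossy as $\sigma$ approaches $0.8$, so one must exploit the actual geometry of this region --- and then the degree bound together with the case analysis isolating $\{1,2\}$, respectively $\{1,2,\alpha,\bar\alpha\}$, among all algebraic integers with conjugates trapped there is precisely where the constants $0.7588\ldots$ and $0.8$ are pinned down. By contrast the generating-function dictionary, the Borel-transform continuation, and the concluding appeal to Carlson's theorem are routine; and the sub-case $\sigma<\log 2$ (where only $\beta=1$ survives, so $f$ is a polynomial) recovers the classical result of P\'olya and Hardy quoted above.
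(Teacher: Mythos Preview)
The paper does not prove this theorem at all: it is quoted verbatim from Boas (the sentence immediately preceding it reads ``We now state the result in \cite{Boas:1954aa} that we will be using later\ldots''), and the paper only \emph{uses} it, via its contrapositive, to obtain Proposition~\ref{prop:entireBW}. So there is no ``paper's own proof'' to compare against.

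That said, your outline is the classical route to this result and is essentially the proof one finds in Boas and in the Pisot--Buck line of work the paper cites. The dictionary $\phi(w)=\sum_{n\ge0}f(n)w^{n}$, the Laplace--Borel continuation confining the singular set to $\exp(-\overline D)$, P\'olya's rationality criterion for integer-coefficient series, Fatou's lemma giving $Q\in\mathbb{Z}[w]$ with $Q(0)=1$, the identification of poles with reciprocals of algebraic integers whose full conjugate sets lie in $\exp(\overline D)$, and the final appeal to Carlson's theorem are all correct and in the right order. Your numerical checks ($|\log 2|\approx0.693$, $|\log(3/2+i\sqrt3/2)|\approx0.7589$, $|\log(1+i)|\approx0.858$) are also the right landmarks separating the cases.

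You have correctly located the only nontrivial step: making P\'olya's theorem apply requires a capacity (transfinite diameter) bound on $\exp(-\overline D)$ that is sharp enough at $\sigma$ near $0.8$, and then one must actually carry out the finite search over low-degree monic $\mathbb{Z}[x]$ polynomials whose roots all lie in $\exp(\{|u|\le\sigma\})$. Neither piece is conceptually deep, but both need to be written out with real estimates rather than asserted; in particular, the capacity of $\exp(\{|u|\le\sigma\})$ is not simply $e^{\sigma}-e^{-\sigma}$ or the radius of an enclosing disc, and the degree bound in P\'olya's theorem depends on getting this below~$1$. If you want the argument to stand on its own, that is the paragraph to expand.
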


Essentially, these functions take the shape of a sum of finitely many terms of the form $a^zP(z)$, where $a$ is an \textit{algebraic integer} (a complex number which is a root of a monic polynomial with integer coefficients). Notice that it is not possible to increase the type up to $\sigma=\pi$ since $\sin(\pi z)$ is an integral-valued entire function of exponential type $\pi$.

Theorem \ref{thm:IntEFET} relates the type and the structure of an integral-valued entire function. However, we are interested in integral-valued bandlimited signals that are square integrable, thus we can use the previous theorem to derive a result regarding the bandwidth of such signals.

\begin{proposition}
\label{prop:entireBW}
If $f(t)\in L^2(\mathbb{R})$ is a bandlimited function with $f(n)\in\mathbb{Z}$ for all integers $n\geq0$, then the type $\sigma$ of $f(z)$ satisfies $\sigma\geq0.8$.
\end{proposition}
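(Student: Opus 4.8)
The plan is to argue by contradiction using Theorem \ref{thm:IntEFET}. Suppose $f\not\equiv0$ and, for contradiction, that its exponential type satisfies $\sigma<0.8$. First I would record two facts about $f$: by the Paley--Wiener theorem it extends to an entire function of exponential type $\sigma$, and since $\hat f$ has compact support it belongs to $L^1$, so Fourier inversion shows $f$ is bounded on $\mathbb{R}$ by $\|\hat f\|_1$. Because $\sigma<0.8$ and $f(n)\in\mathbb{Z}$ for all $n\ge0$, Theorem \ref{thm:IntEFET} then applies and yields the representation (\ref{eq:IntExpression}) with polynomials $P_0,P_1,P_2,P_3$. The strategy is to use boundedness of $f$ on the real line to force $P_1\equiv P_2\equiv P_3\equiv0$, so that $f=P_0$ is a polynomial; a polynomial that lies in $L^2(\mathbb{R})$ must vanish identically, contradicting $f\not\equiv0$.

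For the growth analysis I would write $\tfrac32\pm\tfrac{i\sqrt3}{2}=\sqrt3\,e^{\pm i\pi/6}$, so that on $\mathbb{R}$
\begin{equation*}
f(t)=P_0(t)+P_1(t)\,2^t+(\sqrt3)^t\big(P_2(t)e^{i\pi t/6}+P_3(t)e^{-i\pi t/6}\big),
\end{equation*}
where the three exponential factors have moduli $2^t$, $(\sqrt3)^t$ and $(\sqrt3)^t$. Letting $t\to+\infty$ along the reals: if $P_1\not\equiv0$, then $P_1(t)2^t$ dominates every other term, since $2>\sqrt3>1$ forces $2^t$ to outgrow $(\sqrt3)^t$ times any polynomial; hence $|f(t)|\to\infty$, and so $P_1\equiv0$. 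Next, assuming $(P_2,P_3)\neq(0,0)$, let $d=\max(\deg P_2,\deg P_3)$ and let $c_2,c_3$ be the coefficients of $t^d$ in $P_2,P_3$ (so $(c_2,c_3)\neq(0,0)$). The function $\phi(t)=c_2e^{i\pi t/6}+c_3e^{-i\pi t/6}$ is $12$-periodic and not identically zero, because $e^{i\pi t/6}$ and $e^{-i\pi t/6}$ are linearly independent as functions on $\mathbb{R}$; hence $\phi(t_0)\neq0$ for some $t_0$. Evaluating along the arithmetic progression $t_k=t_0+12k$ (where $e^{\pm i\pi t_k/6}=e^{\pm i\pi t_0/6}$) gives $P_2(t_k)e^{i\pi t_k/6}+P_3(t_k)e^{-i\pi t_k/6}=t_k^{d}\phi(t_0)+O(t_k^{d-1})$, whose modulus tends to $\infty$; multiplying by $(\sqrt3)^{t_k}$ and adding $P_0(t_k)$ again forces $|f(t_k)|\to\infty$, a contradiction. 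Thus $P_2\equiv P_3\equiv0$, which closes the argument.

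I expect the one genuine obstacle to be this last step. The two conjugate bases $\tfrac32\pm\tfrac{i\sqrt3}{2}$ share the modulus $\sqrt3$, so they cannot be separated by growth rate on the real axis and must be treated jointly; the point is that $P_2(t)e^{i\pi t/6}+P_3(t)e^{-i\pi t/6}$ cannot be bounded on $\mathbb{R}$ unless both polynomials vanish, which I would establish via periodicity of the leading harmonic together with evaluation along an arithmetic progression. Everything else---the Paley--Wiener setup, the $L^1$/boundedness observation, and the comparison $2^t\gg(\sqrt3)^t\cdot\mathrm{poly}(t)$ used to eliminate $P_1$---is routine.
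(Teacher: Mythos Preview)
Your proposal is correct and follows the same route as the paper: apply the contrapositive of Theorem~\ref{thm:IntEFET} and observe that a nonzero function of the form~(\ref{eq:IntExpression}) cannot lie in $L^2(\mathbb{R})$. The paper's proof simply asserts this incompatibility in one sentence, whereas you supply the growth analysis on the real axis---separating the $2^t$ term by its larger base, then handling the two conjugate $(\sqrt3)^t$ terms jointly via the periodicity of their leading harmonic along an arithmetic progression---so your write-up is strictly more detailed at exactly the point the paper leaves implicit.
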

\begin{proof}
By the Paley-Wiener theorem, we know that $f(t)$ admits an analytic continuation as an entire function of exponential type $\sigma$. However, the function $f$ is square-integrable in the real line, thus it cannot be expressed in the form described in (\ref{eq:IntExpression}). Therefore, by the contrapositive of Theorem \ref{thm:IntEFET}, we have that $\sigma\geq0.8$.
\hfill$\square$\end{proof}

Due to the connection between the bandwidth of a square-integrable bandlimited signal and the type of its analytic extension, we can interpret the previous result as providing as a lower bound for the maximum frequency component of integral-valued bandlimited signals. Since these signals satisfy $f(n)\in\mathbb{Z}$ for all $n\in\mathbb{Z}$, Proposition \ref{prop:entireBW} implies that their maximum frequency component has to be always greater or equal than $0.8$ rad/s. In summary, there are no square-integrable bandlimited signals that take integer values at integer points whose spectrum is confined to the interval $(-0.8,0.8)$ rad/s.

In the following, we show how to utilize this result to construct functions with a certain bandwidth by choosing its sample values. In the next section, we will extend this result to lattice functions and will make the connection to the quantization error introduced by A/D converters.

\subsection{Constructing functions with bandwidth between $0.8$ and multiples of $\pi$.}
Proposition \ref{prop:entireBW} establishes a relationship between sample values, in this case integers, and the bandwidth of a signal. We now show how to construct a function whose maximum frequency component lies in an interval based solely on appropriately choosing a subset of its sample values as integers. We make this notion precise in the following result.
\begin{corollary}
\label{cor:BWconstruction}
If $\{a_n\}_{n\in\mathbb{Z}}$ is a complex square-summable sequence such that $a_n\in\mathbb{Z}$ for all $n\geq0$, then the type $\sigma$ of the entire function constructed as
\begin{equation}
\label{eq:canonicalseries}
f(t)=\sum_{n\in\mathbb{Z}}a_n\mathrm{sinc}(t-n)
\end{equation}
satisfies $0.8\leq\sigma\leq\pi$. Moreover, there are finitely many $a_n\neq0$ for $n\geq0$.
\end{corollary}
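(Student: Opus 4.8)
The plan is to combine three ingredients already available: the classical sampling theorem (for the upper bound and for well-definedness of $f$), the integer-partial-sum/Cauchy argument from the proof of Proposition~\ref{prop:CountabilityIntBL} (for the finiteness claim), and Proposition~\ref{prop:entireBW} (for the lower bound).

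First I would settle the upper bound and the analytic status of $f$. Since $\{a_n\}\in\ell^2(\mathbb{C})$, the classical sampling theorem guarantees that the series in \eqref{eq:canonicalseries} converges in $L^2(\mathbb{R})$ and, by the reproducing-kernel property of $\mathcal{B}_\pi$ (the translates $\{\mathrm{sinc}(\cdot-n)\}_{n\in\mathbb{Z}}$ being an orthonormal basis), also pointwise, to a function $f\in\mathcal{B}_\pi$ satisfying $f(m)=a_m$ for every $m\in\mathbb{Z}$. By the Paley--Wiener characterization recalled in Section~\ref{section:IVEntireFunctions}, $f$ admits an analytic continuation as an entire function with $|f(z)|\le e^{\pi|z|}$, hence its exponential type satisfies $\sigma\le\pi$.

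Next I would prove the ``moreover'' clause by a one-sided copy of the argument in Proposition~\ref{prop:CountabilityIntBL}. Put $S:=\|f\|_2^2=\sum_{n\in\mathbb{Z}}|a_n|^2<\infty$ and $S_N:=\sum_{n=0}^{N}|a_n|^2$. Because $a_n\in\mathbb{Z}$ for $n\ge0$, each $S_N$ is a nonnegative integer; the sequence $(S_N)_N$ is nondecreasing and bounded above by $S$, hence convergent, hence Cauchy, so for $\varepsilon<1$ there is an integer $N_o>0$ with $|S_N-S_M|<1$ for all $N,M\ge N_o$. Since the $S_N$ are integers this forces $S_N=S_M$ for $N,M\ge N_o$, and therefore $a_n=0$ for all $n\ge N_o+1$; only finitely many $a_n$ with $n\ge0$ are nonzero. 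Note that integrality on the negative integers is never used here, which is why the weaker hypothesis $a_n\in\mathbb{Z}$ for $n\ge0$ suffices.

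Finally I would obtain the lower bound: $f$ is square-integrable, bandlimited, and satisfies $f(n)=a_n\in\mathbb{Z}$ for all integers $n\ge0$, so Proposition~\ref{prop:entireBW} applies verbatim and gives $\sigma\ge0.8$; together with the first paragraph this yields $0.8\le\sigma\le\pi$. I do not anticipate a genuine obstacle: the only points needing care are (i) checking that the $L^2$-limit $f$ really interpolates the $a_n$ so that Proposition~\ref{prop:entireBW} is directly applicable, and (ii) the implicit nondegeneracy convention $f\not\equiv0$ (if every $a_n$ vanished the series would produce the zero function, of type $0$), under which the corollary is exactly Proposition~\ref{prop:entireBW} supplemented by the one-sided Cauchy argument above.
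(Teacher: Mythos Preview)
Your proposal is correct and follows essentially the same route as the paper: Parseval/sampling theorem to place $f$ in $\mathcal{B}_\pi$ and obtain $\sigma\le\pi$, Proposition~\ref{prop:entireBW} for $\sigma\ge0.8$, and square-summability plus integrality for the finiteness claim. The only minor difference is that the paper bypasses the partial-sum Cauchy argument by observing directly that $a_n\to0$ together with $a_n\in\mathbb{Z}$ for $n\ge0$ forces $a_n=0$ for all large $n$.
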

\begin{proof}
Since the sequence is square-summable, by Parseval's identity we have that $f(t)\in L^2(\mathbb{R})$. The latter implies that $f(t)$ cannot be a sum with terms of the form $a^zP(z)$. According to Lemma \ref{prop:entireBW}, the function $f(z)$ is of type $\sigma\geq0.8$. By construction, we also have the upper bound $\sigma\leq\pi$.
The square-summability of the sequence implies that $a_n\to0$ as $n\to\infty$. This implies that there exists an $N>0$ such that $a_n=0$ for all $n\geq N$ and the conclusion follows.
\hfill$\square$\end{proof}

The previous result shows a way of constructing a function whose maximum frequency component is between $0.8$ and $\pi$ rad/s. We only have to construct a square-summable sequence with any value for negative integers and finitely many nonzero integer values for $n\geq0$ setting the remaining ones to zero. Notice that it is not guaranteed that any square-summable sequence in (\ref{eq:canonicalseries}) generates a function whose bandwidth is $\pi$. Indeed, a straightforward counterexample is oversampling. If we take the values of a sinewave at the integers whose frequency is strictly smaller than $\pi/2$ rad/s, then the sequence $\{f(n)\}_{n\in\mathbb{Z}}$ together with (\ref{eq:canonicalseries}) will result in a function whose bandwidth is strictly smaller than $\pi/2$.

Interestingly, we can easily extend the procedure and consider as interpolating functions $\{\mathrm{sinc}(t/k-n)\}_{n\in\mathbb{Z}}$ for some integer $k\geq1$. In this case, we can choose $a_{nk}\in\mathbb{Z}$ for integers $n\geq0$. The latter are the sample values that correspond to integer sampling points in the sampling sequence. In this situation, we ensure that the bandwidth of the resulting function is between $0.8$ and $k\pi$. In other words, we decimate by a factor of $k$ the sequence for nonnegative integers and restrict those to be integers.

\section{Spectral Properties of Lattice functions}
\label{section:LFSpectralProps}
We develop in this section the main implications of the result of Proposition \ref{prop:entireBW}. In particular, we start by framing the result from the perspective of the Discrete-Time Fourier Transform (DTFT) of quantized sequences where we present a lower bound on the maximum frequency component of quantized sequences---this can be seen as an effect of the quantization error. Then, we discuss the spectral properties of the interpolation of quantized sequences, and finally we add a result concerning the Fourier series coefficients of periodic signals.

The previous sections focused on integral-valued bandlimited signals to cleanly introduce the results. We now generalize these results to arbitrary bandlimited lattice functions. Assume that the lattice points are $\{(nT+\tau,m\Delta+\gamma)\}_{n,m\in\mathbb{Z}}$ for $T,\Delta>0$ and $\tau,\gamma\in\mathbb{R}$. The lattice functions in this case take values $m\Delta+\gamma$ for some $m\in\mathbb{Z}$ at every instant $nT+\tau$ for $n\in\mathbb{Z}$. Assume further that these functions are bandlimited with appropriate decay conditions on the real line. It is now possible to relate their properties to those of integral-valued bandlimited functions. In order to do so, consider the lattice function $f(t)$ for the lattice described above and construct the function
\begin{equation}
f_o(t)=:\frac{1}{\Delta}(f(tT+\tau)-\gamma)
\end{equation}
where it follows that $f_o(\cdot)$ is an integral-valued bandlimited function. The Fourier transforms are then related by
\begin{equation}
\label{eq:FTxscaling}
\hat{f}(\xi)=T\Delta e^{-i2\pi\xi\tau}\hat{f}_o(T\xi)+\gamma\delta(\xi)
\end{equation}
where $\delta(\cdot)$ is the Dirac delta function. In view of (\ref{eq:FTxscaling}), the results in the previous sections can be easily extended to general lattice functions by scaling the bandwidth by a factor of $1/T$. In particular, the bound in Proposition \ref{prop:entireBW} will now be $\sigma\geq 0.8/T$ rad/s. Similarly, the bounds derived in Corollary \ref{cor:BWconstruction} will be of the form $0.8/T\leq\sigma\leq k\pi/T$ rad/s.

Thus, any bandlimited lattice function on a grid $\{(nT+\tau,m\Delta+\gamma)\}_{n,m\in\mathbb{Z}}$ has spectral content at least up to $0.8/T$ rad/s irrespective of the value of $\Delta$ or $\gamma$, i.e. irrespective of the resolution of the quantizer.


\subsection{DTFT of Quantized Discrete-Time Signals}
The previous results state that any bandlimited lattice function for the lattice points $\{(nT+\tau,m\Delta+\gamma)\}_{n,m\in\mathbb{Z}}$ has frequency components that extend at least up to $0.8/T$ rad/s. Since the DTFT of a discrete-time sequence $x[n]$ is related to the Fourier transform of its bandlimited interpolation, i.e. $x(t)=\sum_{n\in\mathbb{Z}}x[n]\mathrm{sinc}((t-nT)/T)$, we can formally express in the next result that the maximum frequency component of the DTFT of quantized sequences lies between $0.8$ and $\pi$ rad/s.


\begin{corollary}
\label{cor:DTFTquant}
Let $\tilde{x}[n]\in\ell^2(\mathbb{C})$ be a sequence such that $\tilde{x}[n]=m_n\Delta+\gamma$ for $m_n\in\mathbb{Z}$, $\Delta>0$, $\gamma\in\mathbb{R}$, and all $n\in\mathbb{Z}$. Then, the Discrete-Time Fourier transform of $\tilde{x}[n]$ given by
\begin{equation}
\tilde{X}(e^{i\omega})=\sum_{n\in\mathbb{Z}}\tilde{x}[n]e^{-i\omega n}
\end{equation}
for all $\omega\in\mathbb{R}$, satisfies $\tilde{X}(e^{j\omega})\neq0$ for $\omega$ in some nonempty interval contained in $[0.8,\pi]$.
\end{corollary}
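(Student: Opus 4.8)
The plan is to pass from the discrete sequence to its bandlimited interpolation, recognize the latter as a bandlimited lattice function, apply the lattice-function version of Proposition~\ref{prop:entireBW} to bound its exponential type from below, and then read off the conclusion from the standard identification of the DTFT of $\tilde{x}[n]$ with the Fourier transform of that interpolation.

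First I would exploit the rigidity of the hypotheses. Since $\tilde{x}[n]\in\ell^2(\mathbb{C})$ we have $\tilde{x}[n]\to 0$, while $\tilde{x}[n]$ is constrained to the shifted lattice $\Delta\mathbb{Z}+\gamma$, which is discrete and closed. A convergent sequence in a discrete closed set is eventually equal to its limit, so $\gamma\in\Delta\mathbb{Z}$ and $\tilde{x}[n]=0$ for all sufficiently large $|n|$; writing $\tilde{x}[n]=m'_n\Delta$ with $m'_n\in\mathbb{Z}$, only finitely many $m'_n$ are nonzero. Discarding the trivial case $\tilde{x}\equiv 0$ (for which the statement is vacuous), it follows that $\tilde{X}(e^{i\omega})=\Delta\sum_n m'_n e^{-i\omega n}$ is a nonzero trigonometric polynomial, hence continuous, $2\pi$-periodic, and with only finitely many zeros in $[-\pi,\pi]$.

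Next I would introduce the bandlimited interpolation $x(t)=\sum_{n\in\mathbb{Z}}\tilde{x}[n]\,\mathrm{sinc}(t-n)=\Delta\sum_n m'_n\,\mathrm{sinc}(t-n)$. By Parseval's identity $x\in L^2(\mathbb{R})$, and being a finite linear combination of integer shifts of $\mathrm{sinc}$ it lies in $\mathcal{B}_\pi$ with $x(n)=\tilde{x}[n]=m'_n\Delta$; thus $x$ is a bandlimited lattice function for the grid $\{(n,m\Delta+\gamma)\}_{n,m\in\mathbb{Z}}$ (equivalently, $x/\Delta$ is integer valued). Applying Proposition~\ref{prop:entireBW}---in the form extended to arbitrary bandlimited lattice functions in the preceding subsection, with $T=1$---gives that the exponential type $\sigma$ of $x$ satisfies $\sigma\geq 0.8$, and by construction $\sigma\leq\pi$. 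Since the Fourier transform of $\mathrm{sinc}(t-n)$ is $e^{-i2\pi\xi n}\mathbf{1}_{[-1/2,1/2]}(\xi)$, one has $\hat{x}(\xi)=\tilde{X}(e^{i2\pi\xi})$ for $|\xi|<1/2$ and $\hat{x}(\xi)=0$ for $|\xi|\geq 1/2$. By the Paley--Wiener characterization the exponential type of $x$ coincides with the radius of the support of $\hat{x}$, so $\tilde{X}(e^{i\omega})\neq 0$ for some $\omega$ with $|\omega|$ arbitrarily close to $\sigma\geq 0.8$; since $\tilde{X}$ is a nonzero trigonometric polynomial it vanishes at only finitely many points of $[0.8,\pi]$, and therefore $\{\omega\in[0.8,\pi]:\tilde{X}(e^{i\omega})\neq 0\}$ contains a nonempty interval, which is the claim.

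I expect the main obstacle to be the conversion of the lower bound on the exponential type into a statement about nonvanishing of the DTFT on an actual interval contained in $[0.8,\pi]$: a lower bound on the type only says, a priori, that spectral content reaches the essential frequency $0.8$, not that $\tilde{X}$ is nonzero at or above $0.8$ on a set of positive measure, so one must lean on the trigonometric-polynomial structure of $\tilde{X}$---forced by the $\ell^2$ hypothesis together with quantization making all but finitely many samples vanish---to upgrade this to a genuine open interval. A secondary point requiring care is checking the applicability of Proposition~\ref{prop:entireBW}, namely that the interpolation is square integrable and integer/lattice valued; the reduction to finitely many nonzero samples, which simultaneously pins down $\gamma\in\Delta\mathbb{Z}$, is what makes this unambiguous and, incidentally, also disposes of the degenerate case $\tilde{x}\equiv 0$.
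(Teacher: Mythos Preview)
Your proposal is correct and follows essentially the same route the paper indicates: the corollary is stated without proof, with the surrounding text pointing to the bandlimited sinc interpolation, the identification of its Fourier transform with the DTFT, and the lattice-function version of Proposition~\ref{prop:entireBW}. You supply the details the paper omits---in particular the observation that the $\ell^2$ hypothesis together with the lattice constraint forces $\gamma\in\Delta\mathbb{Z}$ and leaves only finitely many nonzero samples, so that $\tilde{X}$ is a nonzero trigonometric polynomial and the type bound upgrades to nonvanishing on an actual subinterval of $[0.8,\pi]$.
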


It should be emphasized that the previous result is independent of the resolution of the quantizer, and it only depends on the fact that the sequence has been quantized. Fig.~\ref{fig:ADC_QSeq} illustrates how any analog signal---with appropriate decay conditions---passed through an A/D converter produces a quantized sequence whose bandwidth lies in the interval described above. In other words, it is not possible to construct a quantized square-summable sequence with a bandwidth, in the sense of the DTFT, smaller than 0.8 rad/s. This also provides a fundamental limit for the modification of the bandwidth by the quantization. Assume the DTFT of a sequence following the C/D block is only nonzero in an interval $(-\omega_o,\omega_o)$ rad/s where $\omega_o<0.8$ rad/s. After the quantizer, the sequence must have frequency components greater or equal than $0.8$ rad/s irrespective of the resolution $\Delta$.

\begin{figure}[!t]
\centering
\includegraphics[scale=0.36]{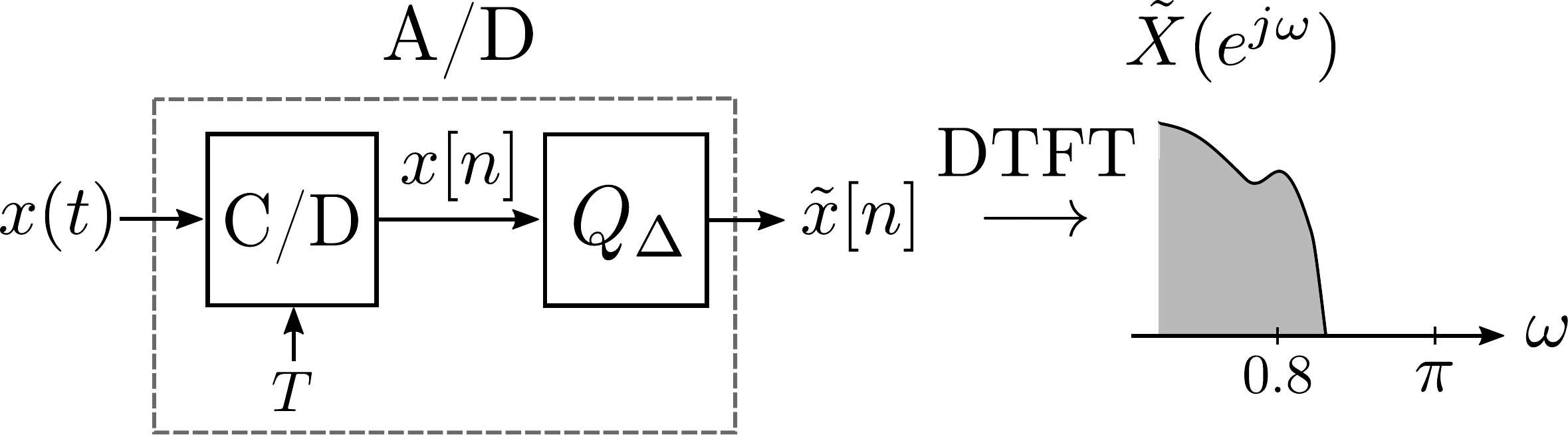}
\caption{Illustration of the spectral properties of a sequence generated by an A/D converter. The input signal is assumed to have the appropriate decay conditions in order to produce a square-summable sequence.}
\label{fig:ADC_QSeq}
\end{figure}

\begin{example}
For analysis purposes, sampling and quantization can be interchanged without altering the result. Thus, it is well known that a sinusoid with frequency $\Omega_o$ passed through a symmetric quantizer contains components at odd harmonics $k\Omega_o$ for $k=\pm1,\pm3,\pm5,\ldots$. Sampling causes these harmonics to be aliased at $k\Omega_o+l\Omega_s$ for $l\in\mathbb{Z}$ and a sampling frequency $\Omega_s>2\Omega_o$. Thus, it is clear that this quantization noise, which here takes the form of aliased harmonic distortion, is in agreement with Corollary \ref{cor:DTFTquant}.
\end{example}

\subsection{Interpolation of Quantized Sequences}
As shown above, quantized sequences have a particular bandwidth characteristic in terms of the DTFT. This property can have an impact on the bandwidth of any interpolation performed based on these samples. Proposition \ref{prop:entireBW} and Corollary \ref{cor:DTFTquant} can then be interpreted in the following manner. Take a quantized sequence, interpolate it to a continuous function according to (\ref{eq:interpolator}). The resulting continuous function will be a lattice function and will always have a bandwidth $\geq0.8/T$ rad/s. If the interpolating function corresponds to a sinc---i.e. $\psi(\cdot)=\mathrm{sinc}(\cdot/T)$---, then the bandwidth will be between $0.8/T$ and $\pi/T$ rad/s.

\begin{example}
\label{ex:inputBWconstraint}
Let us consider the block diagram depicted in Fig.~\ref{fig:ADC_QSeq}. Assume the input signal $x$ has a maximum frequency component satisfying $\Omega_o<0.8/T$ rad/s. By Proposition \ref{prop:entireBW}, this means that $x$ is not a lattice function. Moreover, as shown before, Corollary \ref{cor:DTFTquant} states that $\tilde{x}[n]$ has a maximum frequency component $\geq$0.8 rad/s. Thus, any interpolation of this sequence of the form (\ref{eq:interpolator}), i.e. $\tilde{g}(t)=\sum_{n\in\mathbb{Z}}\tilde{x}[n]\psi(t-nT)$, will always result in a lattice function, thus having a bandwidth above $0.8/T$ rad/s.

\end{example}

\subsection{Quantized Fourier Series Coefficients}
It is possible to relate the Fourier series coefficients of a square-integrable function $\hat{u}$ supported on $[-\xi_o/2,\xi_o/2]$ for some $\xi_o>0$ to the samples of its inverse Fourier transform $u$, i.e. $\{u(n/\xi_o)\}$. In particular, the Fourier series of $\hat{u}$ is given by
\begin{equation}
\label{eq:Fourierseries}
\hat{u}(\xi)=\sum_{n\in\mathbb{Z}}c_ne^{+i2\pi n\xi/\xi_o}
\end{equation}
for $\xi\in[-\xi_o/2,\xi_o/2]$ where
\begin{equation}
c_{-n}:=\frac{1}{\xi_o}\int_{-\xi_o/2}^{+\xi_o/2}\hat{u}(\xi)e^{+i2\pi\xi n/\xi_o}\mathrm{d}\xi=\frac{1}{\xi_o}u(\frac{n}{\xi_o}).
\end{equation}
Note that the Fourier series coefficients are given, up to a scaling factor, by the samples of the inverse Fourier transform of $\hat{u}$. By Proposition \ref{prop:entireBW}, we can relate the support of $\hat{u}$ to the values $\{c_n\}$ (in order to see this, it can be useful to interpret $\hat{u}$ as the Fourier transform of a time-domain signal $u$). 

Similar to Example \ref{ex:inputBWconstraint}, the Fourier series expansion given by (\ref{eq:Fourierseries}) of a function $\hat{u}$ supported on a compact interval strictly contained in $[0.8\xi_o/2\pi,0.8\xi_o/2\pi]$ cannot have coefficients satisfying $c_n=(1/\xi_o)(m_n\Delta+\gamma)$ for $m_n\in\mathbb{Z}$ and all $n\in\mathbb{Z}$ where $\Delta>0$ and $\gamma\in\mathbb{R}$, i.e. quantized Fourier series coefficients. Note, that it is possible that some of the coefficients satisfy the latter, but not all of them. If they were, that would mean that $u$ is a lattice function with a corresponding Fourier transform having nonzero frequency components above $0.8\xi_o$ rad/s, hence a contradiction. This means, for example, that if $\{c_n\}$ is a square-summable sequence of integers, it cannot represent in the form of (\ref{eq:Fourierseries}) a function $\hat{u}$ supported on a compact  interval strictly contained in $[-0.8/2\pi,0.8/2\pi]$.


\section{Conclusion}
We presented a deterministic theoretical analysis of the signals that common A/D converters output, i.e. digital discrete-time signals. We interpreted the interpolation of these sequences with the concept of consistent resampling and requantization. We placed the bandlimited interpolation of these signals---i.e. bandlimited lattice functions---within the framework of integral-valued entire functions to analyze its set and spectral properties. We showed their structure within the space of bandlimited functions and proved a lower bound on their maximum frequency component. This allows to interpret the influence of the quantization error in the spectrum of quantized discrete-time signals. The work shown here suggests that viewing digital discrete-time signals as integral-valued entire functions may provide a theoretical framework where robust deterministic analysis of quantization effects can be performed.

\ifCLASSOPTIONcaptionsoff
  \newpage
\fi

\bibliographystyle{IEEEtran}
\bibliography{/Users/pmnuevo/Documents/BibDeskLibrary/BibDeskLibrary}

\end{document}